\documentclass[12pt, draftclsnofoot, onecolumn]{IEEEtran}
\usepackage{amsmath,amsfonts}
\usepackage{algorithmic}
\usepackage{algorithm}
\usepackage{amsthm}
\usepackage{array}
\usepackage[caption=false,font=normalsize,labelfont=sf,textfont=sf]{subfig}
\usepackage{textcomp}
\usepackage{stfloats}
\usepackage{bm}
\usepackage{url}
\usepackage{verbatim}
\usepackage{graphicx}
\usepackage{amssymb}
\usepackage{cite}
\usepackage{arydshln}
\usepackage[colorlinks, citecolor=black, linkcolor=black]{hyperref}
\hyphenation{op-tical net-works semi-conduc-tor IEEE-Xplore}
\newcommand{\overbar}[1]{\mkern 2mu\overline{\mkern-2mu#1\mkern-2mu}\mkern 2mu}

\linespread{1.4}
\usepackage{xcolor}

% some definitions

\newtheorem{proposition}{Proposition}

\newtheorem{corollary}{Corollary}

\newcommand{\mb}[1]{\mathbf{#1}}
\newcommand{\mc}[1]{\mathcal{#1}}
\newcommand{\mbb}[1]{\mathbb{#1}}

\newcommand{\wt}[1]{\widetilde{#1}}
\newcommand{\wh}[1]{\widehat{#1}}
\newcommand{\ob}[1]{\overbar{#1}}
% updated with editorial comments 8/9/2021

\begin{document}

\title{\huge{On the Uplink Distributed Detection in UAV-enabled Aerial Cell-Free mMIMO Systems}}

\author{Xuesong Pan, Zhong Zheng,~\IEEEmembership{Member, IEEE}, Xueqing Huang,~\IEEEmembership{Member, IEEE}, Zesong Fei,~\IEEEmembership{Senior Member, IEEE}
        % <-this % stops a space
\thanks{X. Pan, Z. Zheng and Z. Fei are with the School of Information and Electronics, Beijing Institute of Technology, Beijing 100081, China (e-mail: \{xs.pan, zhong.zheng, feizesong\}@bit.edu.cn).}
\thanks{X. Huang is with the Department of Computer Science, New York Institute of Technology, NY 11568, USA (e-mail: xhuang25@nyit.edu).}}

% The paper headers
\markboth{Submitted to XXXXXXXX}%
{}

\maketitle

\begin{abstract}	
In this paper, we investigate the uplink signal detection approaches in the cell-free massive MIMO systems with unmanned aerial vehicles (UAVs) serving as aerial access points (APs). The ground users are equipped with multiple antennas and the ground-to-air propagation channels are subject to correlated Rician fading. To overcome huge signaling overhead in the fully-centralized detection, we propose a two-layer distributed uplink detection scheme, where the uplink signals are first detected in the AP-UAVs by using the minimum mean-squared error (MMSE) detector depending on local channel state information (CSI), and then collected and weighted combined at the CPU-UAV to obtain the refined detection. By using the operator-valued free probability theory, the asymptotic expressions of the combining weights are obtained, which only depend on the statistical CSI and show excellent accuracy. Based on the proposed distributed scheme, we further investigate the impacts of different distributed deployments on the achieved spectral efficiency (SE). Numerical results show that in urban and dense urban environments, it is more beneficial to deploy more AP-UAVs to achieve higher SE. On the other hand, in suburban environment, an optimal ratio between the number of deployed UAVs and the number of antennas per UAV exists to maximize the SE.

\end{abstract}

\begin{IEEEkeywords}
Unmanned aerial vehicle, cell-free massive MIMO, Rician channel, operator-valued free probability.
\end{IEEEkeywords}

\section{Introduction} \label{intro}
Unmanned aerial vehicles (UAVs) have been found in a wide range of applications in wireless communication during the past few decades, especially serving as aerial base stations to expand the network coverage and enhance communication quality~\cite{mozaffari2021toward}. Nevertheless, due to the limited carrying capability, the UAV-based stations can be only equipped with a few antennas and it's difficult to achieve satisfactory communication quality with a single UAV-based station. Therefore, it is interesting to investigate the deployment of a swarm of UAV-based stations, which collaborate in signal transmission and processing~\cite{geraci2022will}. 

%In order to achieve higher spectral efficiency (SE), the UAV-based stations are commonly equipped with multiple antennas to take advantage of the spatial diversity.

Parallel to the research on UAV communications, a novel network architecture called the cell-free massive multi-input multi-output (CF mMIMO) system has been proposed to fulfill the demand on the rapid growth of data throughput~\cite{ngo2017cell}. In the CF mMIMO system, a large number of access points (APs) are deployed and connected to a central processing unit (CPU) via the fronthaul link. Instead of creating several autonomous cells, the geographically distributed APs jointly serve the UEs by coherent transmission and reception. By exploiting the increased spatial degree of freedom due to the cooperation, CF mMIMO increases the number of simultaneously served users as well as improves the achieved sum rate, compared to the conventional noncooperative cellular systems~\cite{bjornson2020scalable}. Meanwhile, the CF mMIMO can provide significant performance improvement compared to the small-cell system~\cite{nayebi2017precoding, ngo2017cell} and more uniform communication quality to the edge UEs. Due to these advantages, it is of great interest to empower the UAV swarm with CF mMIMO capability to compensate for the limited onboard hardware resources, which could enable high-speed communications towards massive numbers of ground UEs. Moreover, benefiting from the higher altitude and flexible deployment, the UAV-based stations are more likely to establish Line-of-Sight (LoS) links with ground UEs and provide on-demand services~\cite{zheng2021uav}. As shown in Fig.~\ref{figModel}, the UAVs are divided into the access point UAVs (AP-UAVs) for local signal transmission and reception, as well as the CPU-UAV for joint signal processing. The AP-UAVs and the CPU-UAV are inter-connected via wireless fronthaul links. 

\begin{figure}[ht] 
	\centerline{\includegraphics[width=0.6\columnwidth]{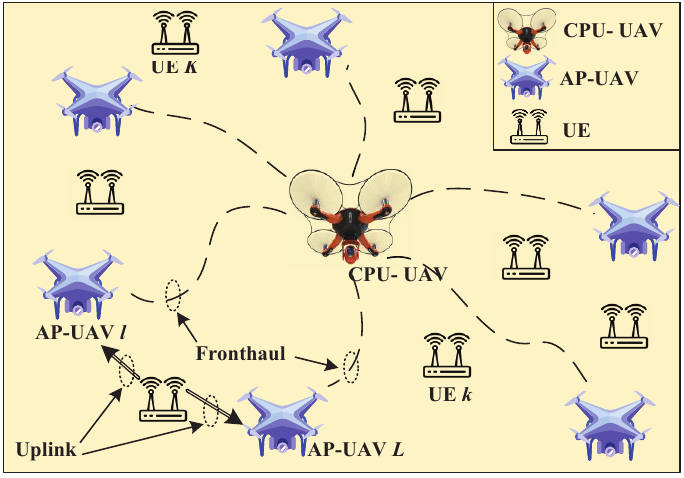}}
	\caption{Multi-UAVs based distributed receiving with one-shot combining.}
	\label{figModel}
\end{figure}

The integration of UAV communications and CF mMIMO was first proposed by Carmen et al.~\cite{d2019cell}, where UAVs are treated as aerial UEs and APs are deployed on the ground. The aerial CF mMIMO system with AP-UAVs was studied by Carles et al., where the UAVs serve as aerial APs and are equipped with a single antenna~\cite{diaz2022cell}. In \cite{wang2022cell}, the downlink communication between the AP-UAVs and ground UEs was studied and the location of AP-UAVs was optimized to maximize the sum rate. Carles et al. aimed to increase the spectral efficiency (SE) by maximizing the minimum local-average signal-to-interference-plus-noise ratio (SINR) in uplink CF mMIMO system, where UAVs are deployed as aerial APs~\cite{diaz2021deployment}. However, the above literature all considers that the UAVs are equipped with a single antenna, which can't exploit the spatial diversity of multiple antennas.

Despite the potential performance enhancement of the UAV-based CF mMIMO systems, this network architecture requires the network-wide CSI available at the CPU-UAV to optimally design the transceivers at the AP-UAVs. However, the interaction of CSI and data between AP-UAVs and CPU-UAV results in immense interactive overhead and unpredictable delay, which is especially critical in UAV-based networks due to wireless fronthaul.

\subsection{Related Works}
 In order to decrease the interactive overhead, network-centric and user-centric approaches are proposed in \cite{zhang2009networked} and \cite{bjornson2011optimality}, respectively, both of which divide the UEs or APs into several subsets. The interactive overhead is significantly reduced by limiting the interaction within the subsets, but leading to notable performance loss~\cite{garcia2010dynamic}. To utilize the macro-diversity as well as decrease the interaction overhead, the large-scale fading decoding (LSFD) strategy was first proposed by Ansuman et al. to reduce interference in cellular system~\cite{adhikary2017uplink} and then applied to CF mMIMO systems in~\cite{bjornson2019making}. In the LSFD strategy, the APs first detect the received signals using the local channel estimates and then forward the detected data to the CPU, where the final joint detection is performed. The key idea of LSFD is that the final detection only depends on the large-scale fading coefficients, which does not need the instantaneous interaction between CPU and APs. 
 
 Based on the same structure, a plethora of papers on LSFD have been published in recent years. Elina et al. investigated the uplink performance of CF mMIMO systems with LSFD receiver, where single-antenna APs first estimate the signal from the single-antenna UEs using the matched filter (MF) with local CSI, and the CPU computes the LSFD based on large-scale fading coefficients~\cite{nayebi2016performance}. Zhang et al. considered the local zero-forcing (ZF) detector with LSFD over Rayleigh fading channels, where different variants of ZF were proposed to achieve high SE or reduce computational complexity~\cite{zhang2021local}. Furthermore, Emil et al. proposed four different levels of cooperation in CF mMIMO systems, including fully-centralized processing, local processing with LSFD, local processing with simple centralized decoding, and small-cell network~\cite{bjornson2019making}. The local processing in~\cite{bjornson2019making} is implemented in APs with MMSE detector, which trade-offs between signal improvement and interference suppression. In the above papers, the ground UEs are assumed to be equipped with a single antenna, while the contemporary UEs have already equipped with multiple antennas. 
 
 With the emergence of new applications in 6G wireless networks, it's necessary to equip the UEs with multi-antennas to improve system performance. Trang C et al. investigated the uplink and downlink SE of CF mMIMO systems with multi-antennas UEs over uncorrelated Rayleigh fading channels, where the closed-form expressions of both uplink and downlink SE are derived~\cite{mai2019uplink, mai2020downlink}. Based on \cite{bjornson2019making}, Wang et al. studied the uplink performance of CF mMIMO system with multi-antennas UEs, where four levels of cooperation are considered over correlated Rayleigh fading channels~\cite{wang2022uplink}. In addition, based on the same system structure, uplink precoding of multi-antennas UEs was investigated to improve uplink SE~\cite{wang2022iteratively}.
 
 However, most existing works, e.g. \cite{nayebi2016performance, zhang2021local, bjornson2019making, mai2019uplink, mai2020downlink, wang2022uplink, wang2022iteratively}, only considered the Non-Line-of-Sight (NLoS) connection between the APs and UEs and ignored the existence of the LoS path, which is the dominant channel component in the UAV-based CF mMIMO systems. In addition, the calculation of the combining weights in LSFD is given in a semi-closed form manner and requires time-consuming numerical simulations, which is difficult to practical application~\cite{wang2022uplink}.

\subsection{Contributions}
To address the above issues, a general UAV-based CF mMIMO system is considered with multi-antennas equipped at both APs and ground UEs, while communicating over generalized Rician fading channels. The main contributions of this paper are summarized as follows: 
\begin{itemize}
	\item We consider a CF mMIMO system with UAVs being the aerial APs and multi-antennas UEs on the ground. In addition, we adopt the Rician fading channel with Kronecker’s correlation model and investigate the pilot contamination over the Rician MIMO channel. By taking the LoS path into account, the considered channel model is more practical and suitable for the UAV-based CF mMIMO system.
	
	\item We design a two-layer detection scheme consisting of local MMSE detection at AP-UAVs and one-shot combining at the CPU-UAV. By using the operator-valued free probability theory, the asymptotically optimal combining weights are derived for the large dimensional MIMO systems, which only depend on the channel statistics and are asymptotically accurate. The free probability theory takes full advantage of the invariance properties of the propagation channels and is a convenient tool to study mutual information, information plus noise models, etc~\cite{couillet2011random}.
	
	\item We investigate the impact of the network configuration options on the achieved sum rate in various propagation environments~\cite{alzenad20173}. Specifically, given a fixed total number of antennas across AP-UAVs, a.k.a. fixing the total transceivers’ capability at the network side, the tradeoff between the deployed number of AP-UAVs ($L$) and the number of equipped antennas per each AP-UAV ($M$) is shown by numerical simulations. Results show that there exists an optimal ratio between $L$ and $M$ in suburban area due to higher LoS probability; whereas, it tends to deploy more AP-UAVs in the urban and dense urban areas.
\end{itemize}

The rest of this article is organized as follows. The channel model, the channel estimation, and the signal model are introduced in Section~\ref{secModel}. In Section~\ref{secDec}, the cell-free detection methods including the fully-centralized detection and the two-layer distributed detection scheme are introduced. In Section~\ref{secAsym}, the asymptotic combining weights of the two-layer distributed detection are calculated by using the operator-valued free probability theory. Numerical results are given in Section~\ref{secResult}. Section~\ref{secConclude} concludes the main findings of this article.

\emph{Notations.} Throughout the paper, vectors are represented by lower-case bold-face letters, and matrices are represented by upper-case bold-face letters. We use $\mb{0}_n$ and $\mb{I}_n$ to represent an $n\times n$ all-zero matrix and an $n\times n$ identity matrix, respectively. The notation ${\rm{diag}}(\mb{A}_i)_{1\leq i\leq n}$ or ${\rm{diag}}(\mb{A}_1,\dots,\mb{A}_n)$ denotes the diagonal block matrix consisting of square matrices $\mb{A}_1,\dots,\mb{A}_n$. The notation $[\mb{A}]_{mn}$ denotes the $(m,n)$-th entry of matrix $\mb A$. The superscript $(\cdot)^{\star}$, $(\cdot)^{\mathrm{T}}$, and $(\cdot)^{\dag}$ are denoted as the conjugate, transpose, and conjugate transpose operations, respectively. We denote $\mathrm{Tr}(\mb{A})$ and ${\rm{vec}}(\mb{A})$ as the trace and vectorization of $n\times n$ matrix $\mb{A}$, respectively. The notation $\mbb{E}[\cdot]$ denotes the expectation. The notation $\otimes$ is denoted as the Kronecker product.

\section{System Model} \label{secModel}
Consider an uplink aerial cell-free communication system consisting of $L$ UAV-mounted APs, one CPU-UAV, and $K$ ground UEs, as illustrated in Fig.~\ref{figModel}. The users' signals are transmitted to the AP-UAVs, and then are wirelessly fronthauled to the CPU-UAV for joint signal detection. Each AP-UAV is equipped with $M$ receiving antennas and each UE is equipped with $N$ transmitting antennas. Denote the total number of receiving and transmitting antennas as $M_{\rm{tot}}=L\times M$ and $N_{\rm{tot}}=N\times K$, respectively. We consider the more practical Rician fading channel, whose model is given in Section~\ref{subsecchanModel}. In addition, we assume that the considered communication system adopts the time-division duplex (TDD) protocol and that the channel responses are reciprocal between AP-UAVs and UEs. The channel responses are assumed to be constant and frequency flat in each coherence block of $\tau_c$-length. Furthermore, each $\tau_c$-length coherence block consists of the $\tau_p$-length channel estimation phase and $\tau_c-\tau_p$-length data transmission phase. The two phases are described below.

\subsection{Channel Model} \label{subsecchanModel}
The ground-to-air channel coefficients between AP-UAV $l$ and UE $k$ are denoted as $\mb{H}_{kl}\in \mbb{C}^{M\times N}$, which follows the Rician MIMO model with Kronecker's correlation structure as
\begin{equation}
	\mb{H}_{kl}=\mb{\ob{H}}_{kl}+\mb{R}_{kl}^{\frac{1}{2}}\mb{W}_{kl}\mb{T}_{kl}^{\frac{1}{2}},
\end{equation}
where $\mb{\ob{H}}_{kl}$ is the deterministic components representing the LoS propagation component and $\mb{W}_{kl}$ denotes the random scattering components consisting of independent identically distributed ($i.i.d.$) Gaussian random variables, i.e., $[\mb{W}_{kl}]_{mn}\in \mc{CN}(0,1)$. The matrices $\mb{R}_{kl}\in \mbb{C}^{M\times M}$ and $\mb{T}_{kl} \in \mbb{C}^{N\times N}$ are the correlation matrices at the receiver and the transmitter sides, respectively. Note that ${\rm{vec}}(\mb{H}_{kl})\in\mc{CN}({\rm{vec}}(\mb{\ob{H}}_{kl}), \mb{C}_{kl})$, where $\mb{C}_{kl}=\mbb{E}[{\rm{vec}}(\mb{H}_{kl}-\mb{\ob{H}}_{kl}){\rm{vec}}(\mb{H}_{kl}-\mb{\ob{H}}_{kl})^{\dag}]=\mb{T}_{kl}^{\rm T} \otimes\mb{R}_{kl}$ is the full correlation matrix~\cite{yu2002models}.

In addition, we denote the distance-dependent pathloss between the AP-UAV $l$ and UE $k$ as $\beta_{kl} = \mbb{E}\{\lVert\mb{H}_{kl}\rVert^2_{\rm{F}}\}/M$ given by
\begin{equation}
	\mbb{E}\{\lVert\mb{H}_{kl}\rVert^2_{\rm{F}}\} = \frac{1}{N}{\rm{Tr}}(\mb{R}_{kl}){\rm{Tr}}(\mb{T}_{kl})+{\rm{Tr}}(\mb{\ob{H}}_{kl}\mb{\ob{H}}_{kl}^\dag)
\end{equation}
Following the standard conventions~\cite{zhang2013capacity}, the channel statistics $\mb{\ob{H}}_{kl}$, $\mb{R}_{kl}$ and $\mb{T}_{kl}$ are normalized such that 
\begin{equation}
	\begin{cases}
		\operatorname{Tr}(\mathbf{R}_{kl})=\frac{1}{\kappa+1}\beta_{kl}M,\\ 
		\operatorname{Tr}(\mathbf{T}_{kl})=N,\\ \operatorname{Tr}\left(\ob{\mathbf{H}}_{kl}\ob{\mathbf{H}}_{kl}^{\dag}\right)=\frac{\kappa}{\kappa+1}\beta_{kl}M,
	\end{cases} 
\end{equation}
where $\kappa$ is the Rician factor.

\subsection{Channel Estimation}
 In the channel estimation phase, $\tau_p$ mutually orthogonal pilot sequences are randomly assigned to UEs. In specific, the UE $k$ is assigned with pilot matrix $\mb{\Phi}_k\in \mbb{C}^{\tau_p\times N}$, where $\mb{\Phi}_k^{\dag}\mb{\Phi}_i=\tau_p\mb{I}_N$, if $k=i$ and $\mb{0}$ otherwise. 

For the uplink communications with massive multi-antenna transmitters, the total number of transmitting antennas can be easily larger than the available number of pilots, i.e., $KN\geq \tau_p$. In this setting, more than one UE has to be assigned with the same pilot matrix. We define the set $\mc{U}_k$ as the subset of UEs, which use the same pilot as UE $k$, and $t_k\in \{1,\ldots, \tau_p\}$ as the index of pilot matrix assigned to UE $k$, respectively. Thus, the received signal $\mb{Y}_l^{tr}\in \mbb{C}^{M\times \tau_p}$ at AP-UAV $l$ in channel estimation phase is given by
\begin{equation}
	\mb{Y}_l^{tr} = \sum_{i=1}^{K} \mb{H}_{il}\mb{F}_i\mb{\Phi}_i^{\rm{T}}+\mb{N}_l^{tr},
\end{equation}
where $\mb{F}_i\in \mbb{C}^{N\times N}$ denotes the precoder of UE $i$ and $\mb{N}_l^{tr}\in \mbb{C}^{M\times \tau_p}$ is the additive noise at AP-UAV $l$ with $i.i.d$ $\mc{CN}(0,\sigma^2)$ entries, where $\sigma^2$ is the noise power. Note that the precoder $\mb{F}_i$ follows the power constrain ${\rm{Tr}}(\mb{F}_i\mb{F}_i^{\dag})\leq p_i$, where $p_i$ is maximum transmitting power of UE $i$.

Then, the AP-UAV $l$ correlates the received signals with the associated pilot $\mb{\Phi}_{k}^\star$ as follow
\begin{equation}
	\begin{aligned}
		\mb{Y}_{kl}^{tr} &= \mb{Y}_l^{tr}\mb{\Phi}_{k}^\star = \sum_{i=1}^{K} \mb{H}_{il}\mb{F}_i\mb{\Phi}_i^{\rm{T}}\mb{\Phi}_{k}^\star+\mb{N}_l^{tr}\mb{\Phi}_{k}^\star, \\
		&= \tau_p \sum_{i\in \mc{U}_k} \mb{H}_{il}\mb{F}_i + \mb{\wt N}_l^{tr},
	\end{aligned}
\end{equation}
where $\mb{\wt N}_l^{tr} = \mb{N}_l^{tr}\mb{\Phi}_{k}^\star$. By vectorizing the received signal $\mb{Y}_{kl}^{tr}$, the received signal can be rewritten as 
\begin{equation}
	\mb{y}_{kl}^{tr} = {\rm{vec}}(\mb{Y}_{kl}^{tr}) =\tau_p \sum_{i\in \mc{U}_k} \mb{\wt F}_i \mb{h}_{il} + {\rm{vec}}(\mb{\wt N}_l^{tr}),
\end{equation}
where $\mb{\wt F}_i = \mb{F}_i^{\rm{T}} \otimes \mb{I}_{M}$ and $\mb{h}_{il}={\rm{vec}}(\mb{H}_{il})$. By using the MMSE estimator~\cite{bjornson2009framework}, the estimated channel $\mb{\wh{H}}_{kl}$ is given by
\begin{equation}
	\begin{aligned}
		\mb{\wh h}_{kl} &= {\rm{vec}}(\mb{\wh H}_{kl}) = {\rm{vec}}(\mb{\ob{H}}_{kl})+{\rm{vec}}(\mb{\wt{H}}_{kl})  \\
		&= \ob{\mb{h}}_{kl}  + \mb{C}_{kl}\mb{\wt F}_k^{\dag}\mb{\Pi}_{kl}^{-1}(\mb{y}_{kl}-\mb{\ob y}_{kl}),
	\end{aligned}
\end{equation}
where $\mb{\ob y}_{kl} = \tau_p \sum_{i\in \mc{U}_k} \mb{\wt F}_i \mb{\ob h}_{il}$ with $\mb{\ob h}_{il} = {\rm{vec}}(\mb{\ob H}_{il})$, $\mb{\wt{H}}_{kl}$ represent the random uncertain components, and $\mb{\Pi}_{kl}$ is the normalized correlation matrix of the received signal $\mb{y}_{kl}^{tr}$, which is given by
\begin{equation}
	\begin{aligned}
		\mb{\Pi}_{kl} &= \frac{1}{\tau_p}\mbb{E}\left\{(\mb{y}_{kl}^{tr}-\mb{\ob y}_{kl})(\mb{y}_{kl}^{tr}-\mb{\ob y}_{kl})^{\dag}\right\}, \\
		&= \tau_p \sum_{i\in \mc{U}_k} \mb{\wt F}_i\mb{C}_{il}\mb{\wt F}_i^{\dag}+\sigma^2\mb{I}_{M_lN}.
	\end{aligned}
\end{equation}
Considering the orthogonality property of MMSE estimation~\cite{bjornson2009framework}, the estimation error $\mb{\breve h}_{kl}={\rm{vec}}(\mb{\breve{H}})=\mb{h}_{kl}-\mb{\wh h}_{kl}\in \mc{CN}(\mb{0},\mb{\wt C}_{kl})$, where $\mb{\wt C}_{kl}=\mb{C}_{kl} - \mb{\wh C}_{kl}$ with $\mb{\wh C}_{kl}=\tau_p\mb{C}_{kl}\mb{\wt F}_k^{\dag}\mb{\Pi}_{kl}^{-1}\mb{\wt F}_k\mb{C}_{kl}$ being the  correlation matrix of the estimation $\mb{\wh h}_{kl}$.

 To describe the correlation of the channel coefficients and facilitate the derivation in Section~\ref{secAsym}, we define the one-sided correlation function of the equivalent channel $\mb{\grave H}_{kl}=\mb{L}\mb{\wt H}_{kl}\mb{R}$ parameterized by a Hermitian matrix $\mb{D}\in\mbb{C}^{N\times N}$ as $\eta_{kl}(\mb{L},\mb{D},\mb{R})=\mbb{E}\left[\mb{\grave H}_{kl}\mb{D}\mb{\grave H}_{kl}^{\dag}\right]$, where $\mb{L}\in\mbb{C}^{M\times M}$ and $\mb{R}\in\mbb{C}^{N\times N}$ are arbitrary matrices. The $(m,n)$-th element of $\eta_{kl}(\mb{L},\mb{D},\mb{R})$ is given by
\begin{equation}
	\begin{aligned}
		\left[\eta_{kl}(\mb{L},\mb{D},\mb{R})\right]_{mn}&=\sum_{i=1}^{N}\sum_{j=1}^{N}\langle\langle\mb{R}\rangle\rangle_j\mb{D}\langle\langle\mb{R}\rangle\rangle_i^{\dag}\langle\langle\mb{L}\rangle\rangle_m\mb{\wh C}_{kl}^{ji}\langle\langle\mb{L}\rangle\rangle^{\dag}_n,
	\end{aligned}
\end{equation}
where the notation $\langle\langle\mb{A}\rangle\rangle_i$ denotes the $i$-th row of the matrix $\mb{A}$. The matrix $\mb{\wh C}_{kl}^{ji}=\mbb{E}[\langle\mb{\wt H}_{kl}\rangle_j\langle\mb{\wt H}_{kl}\rangle_i^{\dag}]$ denotes the $(j,i)$-th sub-matrix of $\mb{\wh C}_{kl}$, with $\langle\mb{\wt H}_{kl}\rangle_i$ and $\langle\mb{\wt H}_{kl}\rangle_j$ being the $i$-th and $j$-th column of $\mb{\wt{H}}_{kl}$, respectively.

The other parameterized one-sided correlation function $\wt{\eta}_{kl}(\mb{L},\mb{\wt D},\mb{R})=\mbb{E}\left[\mb{\grave H}_{kl}^{\dag}\mb{\wt D}\mb{\grave H}_{kl}\right]$ parameterized by $\mb{\wt D}\in\mbb{C}^{M\times M}$, whose $(m,n)$-th element is given by
\begin{equation}
	\begin{aligned}
		\left[\wt{\eta}_{kl}(\mb{L},\mb{\wt D},\mb{R})\right]_{mn}&=\sum_{i=1}^{M}\sum_{j=1}^{M}\langle\mb{L}\rangle_j^{\dag}\mb{\wt D}\langle\mb{L}\rangle_i\langle\mb{R}\rangle_m^{\dag}\mb{\Upsilon}_{kl}^{ji}\langle\mb{R}\rangle_n,
	\end{aligned}
\end{equation}
where the $N\times N$ matrix $\mb{\Upsilon}_{kl}^{ji}=\mbb{E}[\langle\langle\mb{\wt H}_{kl}\rangle\rangle_j^{\dag}\langle\langle\mb{\wt H}_{kl}\rangle\rangle_i]$, whose $(m,n)$-th element is denoted as $\left[\mb{\Upsilon}_{kl}^{ji}\right]_{mn}=\left[\mb{\wh C}_{kl}^{nm}\right]_{ij}$.

\subsection{Uplink Data Transmission}
In the data transmission phase, the received signal $\mb{y}_l\in \mbb{C}^{M}$ at the AP-UAV $l$ is given by
\begin{equation}
	\mb{y}_l = \sum_{k=1}^{K} \mb{H}_{kl}\mb{P}_k \mb{x}_k + \mb{n}_l,
\end{equation}
where $\mb{x}_{k}\in \mbb{C}^{N}$ is the signal transmitted from UE $k$ and $\mb{x}_{k} \in \mc{CN}(0, \mb{I}_{N})$. The precoding matrix $\mb{P}_k$ at the UE $k$ satisfies the power constraint as ${\rm{Tr}}(\mb{P}_k\mb{P}_k^{\dag})\leq p_k$. The additive noise $\mb{n}_l\in \mc{CN}(0,\sigma^2\mb{I}_{M})$ is an $i.i.d.$ complex Gaussian vector.

Based on the Rician fading channel model, in the following sections, we will investigate the decentralized signal detection scheme and derive the closed-form expressions of one-shot combining weights by utilizing the channel statistics.

\section{Decentralized Signal Detection with One-Shot Combining} \label{secDec}
As illustrated in Fig.~\ref{figModel}, several AP-UAVs are connected to a CPU-UAV via the wireless fronthaul, which has limited capacity and uncertain latency. In addition, the computational capability and service range of AP-UAVs are limited by their lightweight fuselage. To achieve joint reception in the aerial CF mMIMO system, a common method called the fully-centralized scheme, where AP-UAVs are treated as remote antennas forwarding the received symbols to the CPU-UAV transparently, is introduced in Section~\ref{FCScheme} and provided as a benchmark. Although the highest SE can be achieved in the fully-centralized scheme, the huge interaction overhead and resulting latency prevent it from being used in practice. Therefore, to take advantage of macro-diversity and reduce the interactive latency, a decentralized reception framework consisting of the local detection at AP-UAVs and the one-shot combining at the CPU-UAV, is proposed in Section~\ref{subsecOneshot}.

\subsection{Fully-centralized Scheme} \label{FCScheme}
In the fully-centralized scheme, the received symbols are centrally detected in the CPU-UAV, which collects channel estimates and the received raw symbols from AP-UAVs. The AP-UAVs act as geographically distributed antennas, which only forward data to the CPU-UAV, and have no computation burden. The fully-centralized scheme can take full advantage of diversity gain and expand coverage to ensure communication quality. 
%However, it is worth noting that the huge interaction overhead between the distributed AP-UAVs and the CPU-UAV including the instantaneous channel coefficients and the raw symbols is impractical in the limited wireless fronthaul link. In addition, the random latency caused by the interaction overhead is inevitable. 

The collected raw signals in the CPU-UAV can be expressed as 
\begin{equation} \label{fcsignal}
\underbrace{\left[ \begin{array}{c}
		\mathbf{y}_1 \\
		\vdots \\
		\mathbf{y}_L
	\end{array} \right]}_{\triangleq \mathbf{y}}=\underbrace{\left[ \begin{array}{c}
		\mathbf{H}_{1} \\
		\vdots \\
		\mathbf{H}_{L}
	\end{array} \right]}_{{\triangleq \mathbf{H}}}
\underbrace{\left[ \begin{array}{ccc}
		\mathbf{P}_{1} & & \\
		& \ddots &\\
		& & \mathbf{P}_{K}\\
	\end{array} \right]}_{{\triangleq \mathbf{P}}}
\underbrace{\left[ \begin{array}{c}
		\mb{x}_{1} \\
		\vdots \\
		\mb{x}_{K}
	\end{array} \right]}_{{\triangleq \mathbf{x}}}+
\underbrace{\left[ \begin{array}{c}
		\mathbf{n}_1 \\
		\vdots \\
		\mathbf{n}_L
	\end{array} \right]}_{\triangleq \mb n},
\end{equation}
In a more compact form, $\mb{y}$ is rewritten as
\begin{equation}
	\mb{y} = \mb{H}\mb{P}\mb{x}+\mb{n},
\end{equation}
where $\mb{{H}}_l=[\mb{{H}}_{1l},\mb{{H}}_{2l},\dots,\mb{{H}}_{Kl}]\in \mbb{C}^{M\times N_{\rm{tot}}}$ is the aggregative channel of AP-UAV $l$.

The CPU-UAV collects the channel estimates and recovers the received signals with the global MMSE detector, which minimizes the conditional MSE between the original signal $\mb{x}$ and the global estimated signal $\mb{\wh{x}}=\mb{U}^\dag\mb{y}$ given the channel estimation $\wh{\mb{H}}$ as follows
\begin{equation}
	{\rm{M}}_{\rm{FC}} = \underset{\substack{\mb{U}}}
	{\text{min}}~~  \mathbb{E}\{\lVert \mb{\wh{x}} - \mb x \rVert ^2|\mb{\wh H}\},
\end{equation}
where $\mb{\wh H}\in\mbb{C}^{M_{\rm{tot}}\times N_{\rm{tot}}}$ is the aggregative channel estimates, and has the same structure as $\mb{H}$ in (\ref{fcsignal}). The global MMSE detector $\mb{U}\in \mbb{C}^{M_{\rm{tot}}\times N_{\rm{tot}}}$ is given by
\begin{equation} \label{globalU}
	\mb{U} = \left(\mb{\wh H}\mb{P}\mb{P}^{\dag}\mb{\wh H}^\dag+\mb{\wt{C}}^{\prime}+\sigma^2\mb{I}_{M_{\rm{tot}}}\right)^{-1}\mb{\wh H}\mb{P},
\end{equation}
where $\mb{\wt{C}}^{\prime}={\rm{diag}}(\mb{\wt{C}}_{l}^{\prime})_{1\leq l\leq L} \in \mbb{C}^{M_{\rm{tot}}\times M_{\rm{tot}}}$. The Hermitian matrix $\mb{\wt{C}}_{l}^{\prime} = \sum_{k=1}^{K}\mb{\wt{C}}_{kl}^{\prime} \in \mbb{C}^{M\times M}$, with $\mb{\wt{C}}_{kl}^{\prime} =\mbb{E}(\mb{\breve H}_{kl}\mb{P}_k\mb{P}^{\dag}_k\mb{\breve H}_{kl}^{\dag})$, whose $(m,n)$-th element is given by
\begin{equation}
	[\mb{\wt{C}}_{kl}^{\prime}]_{mn}=\sum_{n_1=1}^{N}\sum_{n_2=1}^{N}[\mb{P}_k\mb{P}_k^{\dag}]_{n_2n_1}[\mb{\wt C}_{kl}^{n_2n_1}]_{mn},
\end{equation}
where $\mb{\wt C}_{kl}^{n_2n_1}=\mbb{E}[\langle\mb{\breve{H}}_{kl}\rangle_{n_2}\langle\mb{\breve{H}}_{kl}\rangle_{n_1}^{\dag}]$ is the $(n_2,n_1)$-th sub-matrix of $\mb{\wt C}_{kl}$.

Based on~(\ref{globalU}), the detected signals of UE $k$ can be written as
\begin{equation} \label{geqx}
	\begin{aligned}
		\mb{\wh{x}}_k &= \mb{U}_k^{\dag}\mb{y}
		= \mb{U}_k^{\dag}\mb{H}_k\mb{P}_k\mb{x}_k+\sum_{i=1\backslash k}^{K}\mb{U}_k^{\dag}\mb{H}_i\mb{P}_i\mb{x}_i + \mb{U}_k^{\dag}\mb{n},
	\end{aligned}
\end{equation}
where $\mb{U}_k = \left(\mb{\wh H}\mb{P}\mb{P}^{\dag}\mb{\wh H}^\dag+\mb{\wt{C}}^{\prime}+\sigma^2\mb{I}_{M_{\rm{tot}}}\right)^{-1}\mb{\wh H}_k\mb{P}_k$. The channel coefficient $\mb{H}_k=[\mb{H}_{k1}^{\dag},\ldots,\mb{H}_{kL}^{\dag}]^{\dag}\in\mbb{C}^{M_{\rm{tot}}\times N}$ and the channel estimates $\mb{\wh H}_k$ have the same structure as $\mb{H}_k$. 
Based on~(\ref{geqx}), the achievable SE of UE $k$ using per-user-basis MMSE-SIC detector~\cite{li2016massive} is given by
\begin{equation} \label{SEFC}
	{\rm{SE}}_k^{\rm{FC}}=(1-\frac{\tau_p}{\tau_c})\mbb{E}\left\{\log_2\lvert\mb{I}_N+{\rm{SINR}}_k^{\rm{FC}}\rvert\right\},
\end{equation}
where
\begin{equation}
	{\rm{SINR}}_k^{\rm{FC}}=\frac{\mb{P}_k^{\dag}\mb{\wh H}_k^{\dag}\mb{U}_k\mb{U}_k^{\dag}\mb{\wh H}_k\mb{P}_k}{\mb{U}_k^{\dag}\left(\sum_{i=1\backslash k}^{K}\mb{\wh H}_i\mb{P}_i\mb{P}_i^{\dag}\mb{\wh H}_i^{\dag}+\mb{\wt{C}}^{\prime}+\sigma^2\mb{I}_M\right)\mb{U}_k}.
\end{equation}

The fully-centralized scheme with the global MMSE detector not only minimizes the MSE between the global estimated signals $\mb{\wh x}$ and the original signals $\mb{x}$, but also maximizes the SE~\cite[Th.~1]{wang2022uplink}. However, the aggregation of instantaneous channel coefficients and raw signals at the CPU-UAV requires large transmission bandwidth, which incurs unpredictable latency. In order to overcome this drawback, we propose the distributed MMSE detection with one-shot combining scheme (MMSE-OneShot), which is a two-layer signal detection scheme, including the local MMSE detection in AP-UAVs and one-shot combining in the CPU-UAV.

\subsection{Distributed MMSE Detection with One-Shot Combining} \label{subsecOneshot}
In this subsection, we will introduce the two-layer detection scheme. The main difference between the fully-centralized scheme and the MMSE-OneShot scheme is the interaction of information. In the fully-centralized scheme, the instantaneous channel coefficients and raw symbols require to be forwarded from AP-UAVs to the CPU-UAV in every coherence block. Nevertheless, in the MMSE-OneShot scheme, the CPU-UAV requires the local detected signals from each AP-UAV, as well as the long-term channel statistics between AP-UAVs and UEs, to obtain the jointly detected signals. Therefore, only the vector-sized signals need to be forwarded to the CPU-UAV in every coherence block, while the channel statistics, changing much slower than the instantaneous channel coefficients\footnote{The channel statistics are independent of frequency and about 40 times slower than the instantaneous channel~\cite{nayebi2016performance}} only need to be forwarded in every several consecutive blocks. The details of the proposed MMSE-OneShot scheme are described as follows.

In the first layer of the MMSE-OneShot scheme, AP-UAVs individually recover the received signals with the local MMSE detector based on the local instantaneous channel coefficients. The detected signal at AP-UAV $l$ is given by
\begin{equation}
	\mb{\wh x}_l = \mb{U}_l^{\dag}\mb{y}_l,
\end{equation}
where the local MMSE detector $\mb{U}_l\in \mbb{C}^{M\times N_{\rm{tot}}}$ minimizes the MSE between the original signals $\mb{x}$ and the detected signals $\mb{\wh{x}}_l$, which is given by
\begin{equation}
	\mb{U}_l = (\mb{\wh{H}}_l\mb{P}\mb{P}^{\dag}\mb{\wh H}_l^{\dag}+\mb{\wt{C}}_{l}^{\prime}+\sigma^2\mb{I}_{M})^{-1}\mb{\wh{H}}_l\mb{P},
\end{equation}
where $\mb{\wh{H}}_l=[\mb{\wh{H}}_{1l},\mb{\wh{H}}_{2l},\dots,\mb{\wh{H}}_{Kl}]\in \mbb{C}^{M\times N_{\rm{tot}}}$ is the aggregative channel estimates at AP-UAV $l$. Note that the detector $\mb{U}_l$ only depends on the local channel estimates, thus there is no inter-UAV interaction required in the first layer.

In the second layer, after detecting the received signal using the local detector $\{\mb{U}_l\}_{1\leq l\leq L}$, AP-UAVs forward the detected signals $\{\mb{\wh x}_l\}_{1\leq l\leq L}$ to the CPU-UAV. Then, the CPU-UAV collects and weighted combines the detected signals as follows
\begin{equation} \label{cbsignals}
	\mb{\wh x}_{\mathrm{dist}} = \sum_{l=1}^{L}\omega_l\mb{\wh x}_l,
\end{equation}
where $\bm{\omega}=[\omega_1,\dots,\omega_L]^{\rm{T}}\in\mbb{C}^{L}$ is the assigned combining weights and $\mb{\wh x}_{\mathrm{dist}}\in\mbb{C}^{N_{\rm{tot}}}$ is the final refined signal in the CPU-UAV. The optimal combining weights $\bm{\omega}$ can be obtained by minimizing the MSE between the original signal $\mb{x}$ and the refined signal $\mb{\wh x}_{\mathrm{dist}}$, which is given as follows

\begin{equation}
	{\rm{M}}_{\rm{dist}} = \underset{\substack{\bm{\omega}}}
	{\text{min}}~~  \mathbb{E}\lVert \mb{\wh{x}}_{\mathrm{dist}} - \mb x \rVert ^2,
\end{equation}
where the expectation is taken with respect to the unknown noise. The closed-form expressions of the optimal combining weights and the corresponding MSE are given in the following proposition.
 
\begin{proposition}
	The optimal combining weights and the corresponding MSE are given by
	\begin{align}
		\bm{\omega} &= (\mb{A}+\mb{Y})^{-1}\mb{v}, \label{weights} \\
		{\rm{M}}_{\rm{dist}} &= \lVert \mb x \rVert^2 - \mb v^\dag (\mb A+\mb Y)^{-1}\mb{v}, \label{MSE}
	\end{align}
	where $\mb{v}\in \mbb{C}^{L}$, $\mb{A}\in \mbb{C}^{L\times L}$ and $\mb{Y}\in \mbb{C}^{L\times L}$, whose entries are given by
	\begin{align}
		[\mb{v}]_l &=\mb{x}^{\dag}\mb{Q}_l\mb{x},\\
		[\mb{A}]_{lm} &= \mb{x}^{\dag}\mb{Q}_l\mb{Q}_m\mb{x},\\
		[\mb{Y}]_{ll} &= {\rm{Tr}}(\mb{U}_l^{\dag}(\mb{\wt{C}}_{l}^{\prime}+\sigma^2\mb{I}_{M}) \mb{U}_l),
	\end{align}
	and
	\begin{equation}
		\mb{Q}_l= \mb{P}^{\dag}\mb{\wh{H}}_l^{\dag}(\mb{\wh{H}}_l\mb{P}\mb{P}^{\dag}\mb{\wh H}_l^{\dag}+\mb{\wt{C}}_{l}^{\prime}+\sigma^2\mb{I}_{M})^{-1}\mb{\wh{H}}_l\mb{P},
	\end{equation}
\end{proposition}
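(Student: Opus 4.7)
The plan is to write each local detector output as a deterministic ``signal'' piece plus a zero-mean ``effective noise'' piece that is independent across AP-UAVs, and then minimize the resulting quadratic form in $\bm\omega$. Substituting $\mb{y}_l=\mb{H}_l\mb{P}\mb{x}+\mb{n}_l$ and decomposing $\mb{H}_l=\mb{\wh H}_l+\mb{\breve H}_l$ (where $\mb{\breve H}_l=\mb{H}_l-\mb{\wh H}_l$ is the aggregated estimation error at AP-UAV $l$), the local detection factors as
\[
\mb{\wh x}_l=\mb{U}_l^{\dag}\mb{\wh H}_l\mb{P}\mb{x}+\mb{U}_l^{\dag}(\mb{\breve H}_l\mb{P}\mb{x}+\mb{n}_l)=\mb{Q}_l\mb{x}+\mb{e}_l,
\]
with $\mb{Q}_l=\mb{U}_l^{\dag}\mb{\wh H}_l\mb{P}$ Hermitian and determined by the local channel estimates alone, and $\mb{e}_l=\mb{U}_l^{\dag}(\mb{\breve H}_l\mb{P}\mb{x}+\mb{n}_l)$ collecting the leftover randomness. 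Consequently $\mb{\wh x}_{\mathrm{dist}}-\mb{x}=\sum_{l}\omega_l\mb{Q}_l\mb{x}-\mb{x}+\sum_{l}\omega_l\mb{e}_l$.

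Next I would compute the expectation of the squared norm. The MMSE orthogonality principle gives $\mathbb{E}[\mb{\breve H}_l\mid\mb{\wh H}_l]=\mb{0}$, and $\mathbb{E}[\mb{n}_l]=\mb{0}$, so the cross term between the signal part and $\sum_l\omega_l\mb{e}_l$ vanishes. Moreover, since the pairs $\{\mb{\breve H}_l,\mb{n}_l\}$ and $\{\mb{\breve H}_m,\mb{n}_m\}$ are mutually independent for $l\neq m$, we have $\mathbb{E}[\mb{e}_l^{\dag}\mb{e}_m]=0$, so the off-diagonal noise terms also drop out. This reduces the MSE to
\[
\mathbb{E}\lVert\mb{\wh x}_{\mathrm{dist}}-\mb{x}\rVert^2=\Big\lVert\sum_{l=1}^{L}\omega_l\mb{Q}_l\mb{x}-\mb{x}\Big\rVert^2+\sum_{l=1}^{L}|\omega_l|^2\,\mathbb{E}\lVert\mb{e}_l\rVert^2.
\]

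Third, I would identify the matrix forms. Expanding the first term and using $\mb{Q}_l^{\dag}=\mb{Q}_l$ yields $\bm\omega^{\dag}\mb{A}\bm\omega-\bm\omega^{\dag}\mb{v}-\mb{v}^{\dag}\bm\omega+\lVert\mb{x}\rVert^2$ with $[\mb{A}]_{lm}=\mb{x}^{\dag}\mb{Q}_l\mb{Q}_m\mb{x}$ and $[\mb{v}]_l=\mb{x}^{\dag}\mb{Q}_l\mb{x}$, matching the claim. For the per-AP noise variance, $\mathbb{E}\lVert\mb{e}_l\rVert^2=\mathrm{Tr}\bigl(\mb{U}_l^{\dag}(\mathbb{E}[\mb{\breve H}_l\mb{P}\mb{P}^{\dag}\mb{\breve H}_l^{\dag}]+\sigma^2\mb{I}_M)\mb{U}_l\bigr)$; using the per-UE independence of the estimation errors together with $\mb{\wt C}_{kl}^{\prime}=\mathbb{E}[\mb{\breve H}_{kl}\mb{P}_k\mb{P}_k^{\dag}\mb{\breve H}_{kl}^{\dag}]$ defined in Section~\ref{FCScheme} collapses the first inner expectation to $\mb{\wt C}_l^{\prime}$, producing exactly $[\mb{Y}]_{ll}$. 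The overall MSE is thus the quadratic form $\mathrm{M}_{\mathrm{dist}}(\bm\omega)=\bm\omega^{\dag}(\mb{A}+\mb{Y})\bm\omega-\bm\omega^{\dag}\mb{v}-\mb{v}^{\dag}\bm\omega+\lVert\mb{x}\rVert^2$.

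The last step is a standard minimization. Because $\mb{A}\succeq\mb{0}$ and $[\mb{Y}]_{ll}\geq\sigma^2\mathrm{Tr}(\mb{U}_l^{\dag}\mb{U}_l)>0$, the Gram matrix $\mb{A}+\mb{Y}$ is positive definite, hence invertible. Completing the square (or equivalently setting the Wirtinger gradient with respect to $\bm\omega^{\star}$ to zero) delivers~(\ref{weights}), and substituting back produces~(\ref{MSE}). The step I expect to require the most care is the reduction $\mathbb{E}[\mb{\breve H}_l\mb{P}\mb{P}^{\dag}\mb{\breve H}_l^{\dag}]=\mb{\wt C}_l^{\prime}$, since a literal conditioning on $\mb{x}$ would leave an $\mb{x}\mb{x}^{\dag}$ sandwich inside the expectation. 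Resolving it requires interpreting the expectation in the proposition as jointly over the AWGN, the channel estimation errors, and the data symbols (with $\mathbb{E}[\mb{x}\mb{x}^{\dag}]=\mb{I}_{N_{\mathrm{tot}}}$) in the noise-term contribution, while treating $\mb{x}$ instantaneously in the signal-term contribution; making this split precise and verifying the resulting independence of $\mb{e}_l$ from $\mb{Q}_l$ is the technical heart of the argument.
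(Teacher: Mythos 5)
Your proposal is correct, and it follows the route the paper itself intends: the paper omits the proof by deferring to \cite[Prop.~1]{pan2023design}, and the standard argument there is exactly your decomposition $\mb{\wh x}_l=\mb{Q}_l\mb{x}+\mb{e}_l$ with $\mb{Q}_l=\mb{U}_l^{\dag}\mb{\wh H}_l\mb{P}$ (which indeed equals the $\mb{Q}_l$ in the statement and is Hermitian), followed by dropping the cross terms via the zero mean and cross-AP independence of $\{\mb{\breve H}_l,\mb{n}_l\}$ and minimizing the resulting positive-definite quadratic form in $\bm{\omega}$. Your closing observation is also well taken: the statement conditions on $\mb{x}$ in $\mb{v}$ and $\mb{A}$ but implicitly averages over $\mb{x}$ (with $\mbb{E}[\mb{x}\mb{x}^{\dag}]=\mb{I}_{N_{\rm{tot}}}$) to turn $\mbb{E}[\mb{\breve H}_l\mb{P}\mb{x}\mb{x}^{\dag}\mb{P}^{\dag}\mb{\breve H}_l^{\dag}]$ into $\mb{\wt C}_l^{\prime}$ in $\mb{Y}$; this hybrid is an imprecision of the proposition rather than of your proof, and it is immaterial downstream since the paper immediately replaces $\mb{x}^{\dag}\mb{Q}_l\mb{x}$ and $\mb{x}^{\dag}\mb{Q}_l\mb{Q}_m\mb{x}$ by $\mathrm{Tr}(\mb{Q}_l)$ and $\mathrm{Tr}(\mb{Q}_l\mb{Q}_m)$ via the trace lemma, under which both readings coincide asymptotically.
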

\begin{proof}
	The proof follows similar steps as the proof of \cite[Prop. 1]{pan2023design}, and is therefore omitted.
\end{proof}

As described in the above proposition, the calculation of combining weights requires the unknown original signal $\mb{x}$, which is unrealistic in the implementation. In addition, the calculation of $\mb{Q}_l$ and $\mb{U}_l$ in the CPU-UAV still needs the instantaneous channel estimates, which are unavailable in the CPU-UAV and need to be transmitted from distributed AP-UAVs. Instead, when the numbers of receiving antennas per AP-UAV $M$ and transmitting antennas per UE $N$ grow to infinity with a fixed ratio, we can utilize the free probability theory to obtain the asymptotic expressions of the combining weights, which only depend on the long-term statistical CSI. In this way, we can still exploit the diversity gain due to multi-UAV joint detection, while expanding the range of wireless coverage. More practically, the amount of information interaction between the CPU-UAV and AP-UAVs is much reduced, compared to the fully centralized scheme.

We first approximate the expressions of $\mb{v}$, $\mb{A}$, and $\mb{Y}$ in~(\ref{weights}) in terms of the Cauchy transform using the large dimensional random matrix theory. Then, we derive the explicit expressions of the Cauchy transform to obtain the asymptotic expressions of the combining weights via operator-valued free probability theory, which will be introduced in Section~\ref{secAsym}.

According to the trace lemma~\cite[Th.~3.4]{couillet2011random}, the entries in $\mb{v}$ and $\mb{A}$ can be rewritten as
 \begin{align}
 	[\mb{v}]_l &=\mb{x}^{\dag}\mb{Q}_l\mb{x}\to_{a.s.} {\rm{Tr}}(\mb{Q}_l),\\
 	[\mb{A}]_{lm} &= \mb{x}^{\dag}\mb{Q}_l\mb{Q}_m\mb{x}\to_{a.s.} {\rm{Tr}}(\mb{Q}_l\mb{Q}_m),
 \end{align}
where $a.s.$ denotes almost surely, which represents that the left-hand-side (LHS) will converge to right-hand-side (RHS) almost surely as $N_{\rm{tot}}\to \infty$.
Based on the matrix inversion lemma~\cite{couillet2011random}, the matrix $\mb{Q}_l$ can be rewritten as 
\begin{equation}
	\mb{Q}_l = \mb{I}_{N_{\rm{tot}}} - \left(\mb{P}^{\dag}\mb{\wh{H}}_l^{\dag}(\mb{\wt{C}}_{l}^{\prime}+\sigma^2\mb{I}_{M})^{-1} \mb{\wh{H}}_l\mb{P}+\mb{I}_{N_{\rm{tot}}}\right)^{-1}.
\end{equation} 

In addition, we define the Cauchy transform of $\mb{B}_l=\mb{P}^{\dag}\mb{\wh{H}}_l^{\dag}(\mb{\wt{C}}_{l}^{\prime}+\sigma^2\mb{I}_{M})^{-1} \mb{\wh{H}}_l\mb{P}\in\mbb{C}^{N_{\rm{tot}}\times N_{\rm{tot}}}$ as 
\begin{equation} \label{Cauchy1}
	\mc{G}_{\mb{B}_l}(z) =  \frac{1}{N_{\rm{tot}}}\mathrm{Tr}\circ \mbb{E}\left\{\left(z\mb{I}_{N_{\rm{tot}}} - \mb{B}_l\right)^{-1}\right\}.
\end{equation}
Then, we can rewrite the expressions of $\mb{v}$ and $\mb{A}$ in terms of $\mc{G}_{\mb{B}_l}(z)$. The $l$-th element in $\mb v$ can be then expressed as
 \begin{equation}
 	\mb{v}_l= N_{\rm{tot}}+N_{\rm{tot}}\mc{G}_{\mb{B}_l}(z),
 \end{equation}

Similarly, the $(l,m)$-th entry in $\mb{A}$ is given by
\begin{equation} \label{matrixA}
	\begin{aligned}
		[\mb{A}]_{lm} = 
		 \begin{cases}
			N_{\rm{tot}}+N_{\rm{tot}}\mc{G}_{\mb{B}_l}(z)+N_{\rm{tot}}\mc{G}_{\mb{B}_m}(z)+N_{\rm{tot}}^2{\rm{Tr}}(\mb{B}_l\mb{B}_m), \qquad
			% \mc{G}_{\mb{B}_l}(z)\mc{G}_{\mb{B}_m}(z)
			 \text{$l \neq m$},\\ 
			N_{\rm{tot}}+2N_{\rm{tot}}\mc{G}_{\mb{B}_l}(z)+ N_{\rm{tot}}\mc{G}^\prime_{\mb{B}_l}(z),
			\qquad\qquad\qquad\qquad\;\;\;\text{$l = m$},
		\end{cases}
	\end{aligned}
\end{equation}
where $z=-1$ in the above two formulas and $\mc{G}^\prime_{\mb{B}_l}(z)=\frac{1}{N_{\rm{tot}}}\mbb{E}\{{\rm{Tr}}(z\mb{I}_{N_{\rm{tot}}}-\mb{B}_l)^{-2}\}$ is the derivative of the Cauchy transform $\mc{G}_{\mb{B}_l}(z)$. The last item of $[\mb{A}]_{lm},l\neq m$ can be obtained by deriving the approximation of $\mb{B}_{l},l=1,\dots,L$, which will be detailed in Section~\ref{secAsym}. 
 
 Besides, the $l$-th diagonal element in $\mb{Y}$ can be also denoted as
 \begin{equation} \label{Y}
 	\begin{aligned}
 		[\mb{Y}]_{ll}&={\rm{Tr}}((\mb{\wh H}_l\mb{P}\mb{P}^{\dag}\mb{\wh H}_l^{\dag}+\mb{S}_l)^{-1}\mb{S}_l)
 		-{\rm{Tr}}((\mb{\wh H}_l\mb{P}\mb{P}^{\dag}\mb{\wh H}_l^{\dag}+\mb{S}_l)^{-2}\mb{S}_l\mb{S}_l^{\dag}),
 	\end{aligned}
 \end{equation}
 where $\mb{S}_l = \mb{\wt{C}}_{l}^{\prime}+\sigma^2\mb{I}_{M}$.
 
 Define the Cauchy transform of $\mb{\wt{B}}_l=\mb{\wh H}_l\mb{P}\mb{P}^{\dag}\mb{\wh H}_l^{\dag}+\mb{\wt{C}}_{l}^{\prime} \in\mbb{C}^{M\times M}$ as 
 \begin{equation} \label{Cauchy2}
 	\mc{G}_{\mb{\wt{B}}_l,\mb{\Xi}}=\frac{1}{M}\mathrm{Tr}\circ\mbb{E}\left\{(z\mb{I}_{M}-\mb{\wt B}_l)^{-1}{\mb{\Xi}}\right\},
 \end{equation} 
 where $\mb{\Xi}\in \mbb{C}^{M\times M}$ is a nonnegative definite matrix with the uniformly bounded spectral norm. Then, the $l$-th diagonal element in $\mb{Y}$ can be rewritten as
 \begin{equation} \label{matrixY}
 	[\mb{Y}]_{ll}=-M\mc{G}_{\mb{\wt{B}}_l,\mb{S}_l}(z)-M\mc{G}^\prime_{\mb{\wt{B}}_l, \mb{S}_l\mb{S}_l^{\dag}}(z),
 \end{equation}
 where $z=-\sigma^2$ and $\mc{G}^\prime_{\mb{\wt{B}}_l, \mb{S}_l\mb{S}_l^{\dag}}(z)=\frac{1}{M}\mbb{E}\{{\rm{Tr}}((z\mb{I}_{M}-\mb{\wt B}_l)^{-2}\mb{S}_l\mb{S}_l^{\dag})\}$ is the derivative of the Cauchy transform $\mc{G}_{\mb{\wt{B}}_l, \mb{S}_l\mb{S}_l^{\dag}}(z)$.
Therefore, the approximation of the combining weights $\bm{\omega}$ is amount to finding the Cauchy transform $\mc{G}_{\mb{B}_l}(z)$ and $\mc{G}_{\mb{\wt B}_l}(z)$.

Based on the proposed MMSE-OneShot scheme, the detected signals of UE $k$ in~(\ref{cbsignals}) can be rewritten as 
\begin{equation}
		\mb{\wh x}_{k}= \sum_{l=1}^{L}\omega_l\mb{\wh x}_{kl}
		= \sum_{l=1}^{L}\omega_l\left(\mb{U}_{kl}^{\dag}\mb{H}_{kl}\mb{P}_k \mb{x}_k +  \sum_{i=1\backslash k}^{K} \mb{U}_{kl}^{\dag}\mb{H}_{il}\mb{P}_i \mb{x}_i + \mb{U}_{kl}^{\dag}\mb{n}_l\right),
\end{equation}
where $\mb{U}_{kl} = (\mb{\wh{H}}_l\mb{P}\mb{P}^{\dag}\mb{\wh H}_l^{\dag}+\mb{\wt{C}}_{l}^{\prime}+\sigma^2\mb{I}_{M})^{-1}\mb{\wh{H}}_{kl}\mb{P}_k \in \mbb{C}^{M\times N}$. Since the CPU-UAV is only aware of the statistical CSI, the achievable SE using MMSE-SIC detector is given in the following corollary~\cite{wang2022uplink}.
\begin{corollary} \label{coro1}
	The achievable SE of UE $k$ with MMSE-SIC detector is given by
	\begin{equation}
		{\rm{SE}}_k = (1-\frac{\tau_p}{\tau_c})\log_2\lvert\mb{I}_N+\mb{S}_k^{\dag}\mb{\Gamma}_k^{-1}\mb{S}_k\rvert,
	\end{equation}
	where $\mb{S}_k=\sum_{l=1}^{L}\omega_l\mbb{E}\left(\mb{U}^{\dag}_{kl}\mb{H}_{kl}\right)\mb{P}_k$ and $\mb{\Gamma}_k=\sum_{i=1}^{K}\sum_{l=1}^{L}\sum_{m=1}^{L}\omega_l\mbb{E}\left(\mb{U}_{kl}^{\dag}\mb{H}_{il}\mb{P}_i\mb{P}_i^{\dag}\mb{H}_{im}^{\dag}\mb{U}_{km}\right)\omega_m^\star-\mb{S}_k\mb{S}_k^{\dag}+\sigma^2\sum_{l=1}^{L}\omega_l\mbb{E}(\mb{U}_{kl}^{\dag}\mb{U}_{kl})\omega_l^\star$.
\end{corollary}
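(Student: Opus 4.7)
The plan is to apply the standard \emph{use-and-then-forget} bound for massive MIMO with statistical CSI at the detector. Since the CPU--UAV knows only $\{\omega_l\}$ and the channel statistics, not the instantaneous realizations $\mb{U}_{kl}$ and $\mb{H}_{kl}$, I would first rewrite the combined signal $\mb{\wh x}_k$ by adding and subtracting its conditional mean with respect to $\mb{x}_k$:
\begin{equation*}
\mb{\wh x}_k = \mb{S}_k\mb{x}_k + \underbrace{\Big(\sum_{l=1}^L \omega_l \mb{U}_{kl}^\dag \mb{H}_{kl} - \mbb{E}\big[\sum_{l=1}^L \omega_l \mb{U}_{kl}^\dag \mb{H}_{kl}\big]\Big)\mb{P}_k\mb{x}_k}_{\text{signal fluctuation}} + \underbrace{\sum_{l=1}^L\sum_{i\neq k}\omega_l \mb{U}_{kl}^\dag \mb{H}_{il}\mb{P}_i\mb{x}_i}_{\text{multiuser interference}} + \underbrace{\sum_{l=1}^L \omega_l \mb{U}_{kl}^\dag \mb{n}_l}_{\text{aggregate noise}},
\end{equation*}
where $\mb{S}_k$ is precisely the deterministic coefficient defined in the corollary.

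Next, I would lump the last three terms into an effective noise vector $\mb{w}_k$ and verify that $\mbb{E}[\mb{w}_k\mb{x}_k^\dag]=\mb{0}$. This follows because (i) the signal-fluctuation term has zero mean given $\mb{x}_k$ by construction, (ii) the interfering symbols $\mb{x}_i$, $i\neq k$, are zero-mean and independent of $\mb{x}_k$ and of all channels, and (iii) $\mb{n}_l$ is independent of everything. I would then compute the covariance $\mbb{E}[\mb{w}_k\mb{w}_k^\dag]$ term by term: the combined signal-plus-interference contribution collapses to $\sum_{i,l,m}\omega_l\mbb{E}[\mb{U}_{kl}^\dag\mb{H}_{il}\mb{P}_i\mb{P}_i^\dag\mb{H}_{im}^\dag\mb{U}_{km}]\omega_m^\star$ because $\mbb{E}[\mb{x}_i\mb{x}_j^\dag]=\delta_{ij}\mb{I}_N$, from which subtracting $\mb{S}_k\mb{S}_k^\dag$ removes the useful-signal component already accounted for in $\mb{S}_k\mb{x}_k$; the noise contribution reduces to $\sigma^2\sum_l\omega_l\mbb{E}[\mb{U}_{kl}^\dag\mb{U}_{kl}]\omega_l^\star$ after invoking the independence of $\mb{n}_l$ across $l$. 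This yields exactly the $\mb{\Gamma}_k$ stated in the corollary.

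With $\mb{\wh x}_k = \mb{S}_k\mb{x}_k + \mb{w}_k$, $\mbb{E}[\mb{w}_k\mb{x}_k^\dag]=\mb{0}$ and $\mb{x}_k\in\mc{CN}(\mb{0},\mb{I}_N)$, the worst-case-noise lemma guarantees a lower bound on the mutual information by replacing $\mb{w}_k$ with a Gaussian vector of the same covariance $\mb{\Gamma}_k$. The resulting auxiliary channel is a deterministic $N$-input MIMO Gaussian channel whose MMSE-SIC capacity is $\log_2|\mb{I}_N+\mb{S}_k^\dag\mb{\Gamma}_k^{-1}\mb{S}_k|$; multiplying by the prelog $(1-\tau_p/\tau_c)$ that accounts for the pilot overhead in each coherence block gives the claimed ${\rm{SE}}_k$.

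The only delicate step is the covariance calculation: one must be careful to keep the double sum over $(l,m)$ in the interference term (because the signals from AP-UAVs $l$ and $m$ share the same $\mb{x}_i$ and so remain correlated after weighting), while the noise term reduces to a single sum over $l$ (since $\mb{n}_l$ and $\mb{n}_m$ are independent for $l\neq m$). This asymmetry between the interference and noise contributions is the main technical pitfall; once it is handled correctly, the rest of the argument is essentially a specialization of the standard UatF bound used in \cite{wang2022uplink} to the weighted-combining structure of (\ref{cbsignals}).
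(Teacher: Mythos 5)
Your proposal is correct and is essentially the same argument the paper relies on: the paper omits the proof and points to the standard use-and-then-forget / worst-case-uncorrelated-noise bound of \cite[Th.~2]{li2016massive} (as adapted in \cite{wang2022uplink}), which is exactly the decomposition into $\mb{S}_k\mb{x}_k$ plus an uncorrelated effective noise with covariance $\mb{\Gamma}_k$ that you carry out, including the correct asymmetry between the double sum over $(l,m)$ for the interference and the single sum over $l$ for the noise.
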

\begin{proof}
	The proof follows similar steps as the proof of \cite[Th. 2]{li2016massive}, and is therefore omitted.
\end{proof}
In this section, we introduce the fully-centralized scheme as a benchmark and propose the MMSE-OneShot scheme with the optimal closed-form expressions of the combining weights. To obtain the theoretical expressions, we then represent the optimal one-shot combining weights in terms of Cauchy transforms $\mc{G}_{{\mb B}_l}(z)$ and $\mc{G}_{\mb{\wt B}_l}(z)$ using the large dimensional random matrix theory, whose explicit expressions will be detailed in the following section.

\section{Asymptotic One-Shot Combining Weights via Operator-Valued Free Probability Theory} \label{secAsym}
 In this section, we will resort to the linearization trick and the operator-valued free probability theory to obtain the explicit expressions of $\mc{G}_{{\mb B}_l}(z)$ and $\mc{G}_{\mb{\wt B}_l}(z)$. In the following discussion, we will focus on the $l$-th AP-UAV and thus omit the subscript $l$ of $\mb{B}_l$ and $\mb{\wt B}_l$ for brevity.

In the considered formulations, where $\mb{B}=\mb{P}^{\dag}\mb{\wh{H}}_l^{\dag}(\mb{\wt{C}}_{l}^{\prime}+\sigma^2\mb{I}_{M})^{-1} \mb{\wh{H}}_l\mb{P}$ and $\mb{\wt B} = \mb{\wh H}_l\mb{P}\mb{P}^{\dag}\mb{\wh H}_l^{\dag}+\mb{\wt{C}}_{l}^{\prime}$, the equivalent channels in $\mb{B}$ and $\mb{\wt B}$ (which will be defined in the following subsections) are non-central and with non-trivial spatial correlations, and thus not free over the complex algebra~\cite{lu2016free}. 
%Therefore, the conventional method, which combines the Cauchy transform and the free multiplicative convolution to obtain the limiting spectral distribution of the product of random matrices, is no longer applicable to obtain the Cauchy transforms $\mc{G}_{{\mb B}}(z)$ and $\mc{G}_{\mb{\wt B}}(z)$.
To address this issue, the Anderson’s linearization trick is adopted~\cite{belinschi2017analytic}. The linearization trick transforms a complicated distribution of an arbitrary self-adjoint polynomial, such as $p\in \mbb{C}(\mb{X}_1,\dots,\mb{X}_n)\in \mc{A}$ with $\mc{A}$ being a complex and unital algebra, into the operator-valued distribution of a linear polynomial $\mb{L}_p=b_0\otimes 1+b_1\otimes \mb{X}_1+\dots+b_n\otimes \mb{X}_n$, with $b_0,\dots,b_n\in \mb{M}_{N}(\mbb{C})$, where $\mb{M}_{N}(\mbb{C})$ is the algebra of complex $N\times N$ matrices. For any complex number $z\in \mbb{C}$, the following conditions are equivalent
\begin{itemize}
	\item The operator $z\mb{I}_N-\mb{P}$ with $\mb{P}:=p(\mb{X}_1,\dots,\mb{X}_n)$ is invertible in $\mc{A}$.
	\item The operator $\mb{\Lambda}(z)-\mb{L}_\mb{P}$ is invertible in $\mb{M}_{N}(\mbb{C})\otimes \mc{A}$, where
	\begin{equation}
		\mb{\Lambda}(z)=
		\left[\begin{array}{c:c}
			z\mb{I}_N & \mb{0}\\ \hdashline
			\mb{0} & \mb{0} \\
		\end{array}\right].
	\end{equation}
\end{itemize}
Then, for some $z\in \mbb{C}$ that fulfills the above conditions, we have~\cite{belinschi2017analytic}
\begin{equation} \label{trick}
	\left\{(\mb{\Lambda}(z)-\mb{L}_\mb{P})^{-1}\right\}^{(1,1)}=(z-\mb{P})^{-1}.
\end{equation}
where $\{\cdot\}^{(1,1)}$ denotes the upper-left $N\times N$ matrix block.

 Therefore, by using the linearization trick, we can first rewrite $\mc{G}_{{\mb B}}(z)$ and $\mc{G}_{\mb{\wt B}}(z)$ as their operator-valued Cauchy transform counterparts, whose expressions can be derived by the operator-valued free probability theory~\cite{voiculescu1995operations}. Then we can obtain the expressions of Cauchy transforms $\mc{G}_{{\mb B}}(z)$ and $\mc{G}_{\mb{\wt B}}(z)$ based on~(\ref{trick}).

\subsection{Cauchy Transform $\mc{G}_{{\mb B}}(z)$}
Recall the Cauchy transform $\mc{G}_{\mb{B}}(z)=\frac{1}{N_{\rm{tot}}}\mbb{E}\{{\rm{Tr}}(z\mb{I}_{N_{\rm{tot}}}-\mb{B})^{-1}\}$ with $\mb{B}=\mb{P}^{\dag}\mb{\wh{H}}_l^{\dag}(\mb{\wt{C}}_{l}^{\prime}+\sigma^2\mb{I}_{M})^{-1} \mb{\wh{H}}_l\mb{P}$, we define the equivalent channel as $\mb{G}=(\mb{\wt C}_{l}^\prime+\sigma^2\mb{I}_{M})^{-\frac{1}{2}}\mb{\wh H}_l\mb{P}\in \mbb{C}^{M\times N_{\rm{tot}}}$ and then we have $\mb{B}=\mb{G}^{\dag}\mb{G}$.

By using the Anderson’s linearization trick, we can construct the block matrix $\mb{L}_\mb{B}\in \mc{M}_n$, where $\mc{M}_n = \mb{M}_n(\mbb{C})$ denotes the algebra of $n\times n$ complex random matrices with $n = N_{\rm{tot}}+M$, as follows
\begin{align}
	\mb{L}_{\mb{B}}=
	\begin{bmatrix}
		\mb{0}_{N_{\rm{tot}}} & \mb{G}^{\dag} \\
		\mb{G} & -\mb{I}_{M}
	\end{bmatrix},
\end{align}

We define the sub-algebra $\mc{D}_n\subset \mc{M}_n$ as the $n\times n$ block diagonal matrix. For each $\mb{K}_n\in \mc{D}_n$, we have
\begin{align} \label{K1}
	\mb{K}_n=
	\begin{bmatrix}
		\mb{D}_1 &  \mb{0}_{N_{\rm{tot}}\times M} \\
		\mb{0}_{M \times N_{\rm{tot}}} & \mb{D}_2
	\end{bmatrix},
\end{align}
where the $N_{\rm{tot}}\times N_{\rm{tot}}$ block diagonal matrix $\mb{D}_1={\rm{blkdiag}}(\mb{D}_{11},\ldots, \mb{D}_{1K})$ with $\{\mb{D}_{1k}\}_{1\leq k\leq K}$ being $N\times N$ sub-matrices and $\mb{D}_2$ is a $M\times M$ sub-matrix.

The $\mc{D}_n$-valued Cauchy transform is then defined as
\begin{align} \label{GL}
	\mc{G}_{\mb{L}_{\mb{B}}}^{\mc{D}_n}(\mb{\Lambda}_n(z)) = \mbb{E}_{\mc{D}_n}\left[(\mb{\Lambda}_n(z)-\mb{L}_{\mb{B}})^{-1}\right],
\end{align}
where the expectation $\mbb{E}_{\mc{D}_n}[\mb{X}]$ has the same structure as $\mb{K}_n$ and is a linear function of $\mb{X}\in \mc{M}_n$, which is defined as
\begin{equation}
	\mbb{E}_{\mc{D}_n}[\mb{X}] = 
	\begin{bmatrix}
		\mbb{E}[\mb{X}_{\mb{D}_1}] &  \mb{0}_{N_{\rm{tot}}\times M} \\
		\mb{0}_{M \times N_{\rm{tot}}} & \mbb{E}[\mb{X}_{\mb{D}_2}]
	\end{bmatrix},
\end{equation}
and the expectation $\mbb{E}[\mb{X}_{\mb{D}_1}]={\rm{blkdiag}}(\mbb{E}[\mb{X}_{\mb{D}_{11}}],\ldots, \mbb{E}[\mb{X}_{\mb{D}_{1K}}])$. In addition, the $n\times n$ diagonal matrix $\mb{\Lambda}_n(z)$ in (\ref{GL}) is given by
\begin{align}
	\mb{\Lambda}_n(z) = \begin{bmatrix}
		z\mb{I}_{N_{\rm{tot}}} & \mb{0}_{N_{\rm{tot}}\times M}\\
		\mb{0}_{M\times N_{\rm{tot}}} & \mb{0}_{M} 
	\end{bmatrix}.
\end{align}

Based on the block matrix inversion lemma, we have
\begin{align}
	\mc{G}_{\mb{L}_{\mb{B}}}^{\mc{D}_n}(\mb{\Lambda}_n(z)) = \mbb{E}_{\mc{D}_n}
	\begin{bmatrix}
		(z\mb{I}_{N_{\rm{tot}}}-\mb{G}^{\dag}\mb{G})^{-1} &
		\mb{G}^{\dag}(z\mb{I}_{M}-\mb{G}\mb{G}^{\dag})^{-1} \\
		(z\mb{I}_{M}-\mb{G}\mb{G}^{\dag})^{-1}\mb{G} &
		z(z\mb{I}_{M}-\mb{G}\mb{G}^{\dag})^{-1}
	\end{bmatrix} ,
\end{align}

Obviously, the upper-left $N_{\rm{tot}}\times N_{\rm{tot}}$ matrix block of $\mc{G}_{\mb{L}_{\mb{B}}}^{\mc{D}_n}(\mb{\Lambda}_n(z))$ is equivalent to $(z\mb{I}_{N_{\rm{tot}}} - \mb{B})^{-1}$, thus the Cauchy transform of $\mb{B}$ is amount to 
\begin{align} \label{iden1}
	\mc{G}_{\mb{B}}(z) = \frac{1}{N_{\rm{tot}}}{\rm{Tr}}\left(\left\{\mc{G}_{\mb{L}_{\mb{B}}}^{\mc{D}_n}(\mb{\Lambda}_n(z))\right\}^{(1,1)}\right),
\end{align}
where $\{\cdot\}^{(1,1)}$ denotes the upper-left $N_{\rm{tot}}\times N_{\rm{tot}}$ matrix block. Henceforth, the Cauchy transform $\mc{G}_{\mb{B}}$ can be obtained by the operator-valued Cauchy transform. Then, in order to obtain the expression of $\mc{G}_{\mb{L}_{\mb{B}}}^{\mc{D}_n}(\mb{\Lambda}_n(z))$, we first separate $\mb{L}_\mb{B}$ into $\mb{\ob{L}}_{\mb{B}}$ and $\mb{\wt L}_\mb{B}$, which are defined as
\begin{align}
	\mb{\ob L}_{\mb{B}} =
	\begin{bmatrix}
		\mb{0}_{N_{\rm{tot}}} & \mb{\ob G}^{\dag} \\
		\mb{\ob G} & -\mb{I}_{M}
	\end{bmatrix}, \qquad
	\mb{\wt L}_{\mb{B}} =
	\begin{bmatrix}
		\mb{0}_{N_{\rm{tot}}} & \mb{\wt G}^{\dag} \\
		\mb{\wt G} & \mb{0}_{M}
	\end{bmatrix},
\end{align}
where $\mb{L}_\mb{B} =\mb{\ob{L}}_{\mb{B}}+\mb{\wt L}_\mb{B}$, $\mb{\ob G} =(\mb{\wt C}_{l}^\prime+\sigma^2\mb{I}_{M})^{-\frac{1}{2}}\mb{\ob H}_l\mb{P}$, and $\mb{\wt G} =(\mb{\wt C}_{l}^\prime+\sigma^2\mb{I}_{M})^{-\frac{1}{2}}\mb{\wt H}_l\mb{P}$.

\begin{proposition} \label{prop2}
	The random variable $\mb{\wt L}_{\mb{B}}$ is semicircular and free from $\mb{\ob L}_{\mb{B}}$ over $\mc{D}_n$.
\end{proposition}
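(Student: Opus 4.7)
I would prove both claims at once by appealing to the standard characterization in operator-valued free probability: a centered, self-adjoint Gaussian random matrix whose operator-valued covariance map preserves a subalgebra $\mc{D}$ is asymptotically $\mc{D}$-semicircular, and is asymptotically $\mc{D}$-free from any deterministic element of bounded norm. So the plan is (i) to verify that $\mb{\wt L}_{\mb{B}}$ is centered, self-adjoint, and Gaussian, (ii) to compute its $\mc{D}_n$-valued covariance map $\eta(\mb{K}_n) = \mbb{E}_{\mc{D}_n}[\mb{\wt L}_{\mb{B}}\mb{K}_n\mb{\wt L}_{\mb{B}}]$ and check that $\eta(\mc{D}_n) \subset \mc{D}_n$, and then (iii) to invoke the Wick-based characterization (all cumulants of order $>2$ vanish; mixed cumulants with the deterministic $\mb{\ob L}_{\mb{B}}$ vanish).

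\textbf{Concrete steps.} Step (i) is immediate: the off-diagonal blocks $\mb{\wt G}, \mb{\wt G}^{\dag}$ of $\mb{\wt L}_{\mb{B}}$ are adjoints of one another, so $\mb{\wt L}_{\mb{B}}$ is self-adjoint; and $\mb{\wt G} = (\mb{\wt C}_{l}^{\prime}+\sigma^{2}\mb{I}_{M})^{-1/2}\mb{\wt H}_{l}\mb{P}$ has zero-mean complex-Gaussian entries because the MMSE-estimate fluctuation $\mb{\wt H}_{l}$ is a linear functional of the i.i.d.\ Gaussian matrix $\mb{W}_{kl}$. For Step (ii) I would use the block structure of $\mb{K}_n$ in~(\ref{K1}) to compute directly that
\[
\mb{\wt L}_{\mb{B}}\mb{K}_n\mb{\wt L}_{\mb{B}} = \begin{bmatrix} \mb{\wt G}^{\dag}\mb{D}_{2}\mb{\wt G} & \mb{0} \\ \mb{0} & \mb{\wt G}\mb{D}_{1}\mb{\wt G}^{\dag} \end{bmatrix},
\]
so that $\eta$ automatically preserves the upper/lower block splitting, and then to show that $\mbb{E}[\mb{\wt G}^{\dag}\mb{D}_{2}\mb{\wt G}]$ is itself $K$-wise block-diagonal in the UE index (so lies in the required $\mb{D}_1$ shape), with the diagonal blocks expressible through the one-sided correlation functions $\wt{\eta}_{kl}$ introduced in~(10), while $\mbb{E}[\mb{\wt G}\mb{D}_{1}\mb{\wt G}^{\dag}]$ is an unconstrained $M\times M$ matrix (the $\mb{D}_2$ shape) expressible through $\eta_{kl}$ in~(9). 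Step (iii) is then a direct application of the operator-valued Wick theorem for Gaussian blocks: the $\mc{D}_n$-valued moments factorize over non-crossing pair partitions with only the second-order cumulant $\eta$ contributing, which is exactly the defining property of a $\mc{D}_n$-semicircular family; and the mean-zero property of $\mb{\wt L}_{\mb{B}}$ together with the deterministic nature of $\mb{\ob L}_{\mb{B}}$ kills all mixed cumulants, yielding the claimed asymptotic $\mc{D}_n$-freeness.

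\textbf{Main obstacle.} The non-routine point is verifying the UE-wise block-diagonality of $\mbb{E}[\mb{\wt G}^{\dag}\mb{D}_{2}\mb{\wt G}]$. Under pilot contamination, columns of $\mb{\wt H}_{l}$ belonging to different users $k,k' \in \mc{U}_{k}$ are correlated through the shared pilot observation used in the MMSE estimator, and this coupling is not automatically killed by taking expectations. The argument must therefore exploit the specific Kronecker structure $\mb{C}_{kl} = \mb{T}_{kl}^{\mathrm{T}}\otimes\mb{R}_{kl}$, the block-diagonality of the precoder $\mb{P}$, and the whitening by $(\mb{\wt C}_{l}^{\prime}+\sigma^{2}\mb{I}_{M})^{-1/2}$ to show that the cross-UE contributions either cancel exactly or vanish asymptotically as $M,N \to \infty$ in the relevant compressed ratio. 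Once this is in hand, the rest of the argument is a mechanical application of the free-probabilistic machinery and sets the stage for the subsequent fixed-point equations for $\mc{G}_{\mb{B}}(z)$ and $\mc{G}_{\mb{\wt B}}(z)$ in Section~\ref{secAsym}.
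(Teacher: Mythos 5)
Your proposal is correct in substance but follows a genuinely different route from the paper's Appendix~\ref{Proof2}. You argue directly: verify that $\mb{\wt L}_{\mb{B}}$ is a centered self-adjoint Gaussian block matrix, compute the covariance map $\mb{K}_n\mapsto\mbb{E}_{\mc{D}_n}[\mb{\wt L}_{\mb{B}}\mb{K}_n\mb{\wt L}_{\mb{B}}]$, check it preserves $\mc{D}_n$, and invoke the operator-valued Wick characterization to kill all cumulants of order other than two and all mixed cumulants with the deterministic part. The paper instead decomposes $\mb{\wt L}_{\mb{B}}=\sum_{k}\mb{\wt L}_{\mb{B},k}$ UE by UE, factors each summand as $\mc{A}_k\mc{X}_k\mc{A}_k^{\dag}$ with $\mc{X}_k$ an independent Wigner-type block matrix and $\mc{A}_k$ deterministic, and then cites the known result (Lu et al., Appendix~B of the free deterministic equivalents paper) that such transformed Wigner matrices are $\mc{D}_n$-semicircular and free from deterministic matrices; the explicit covariance computation you propose in step (ii) is exactly what the paper defers to the $R$-transform evaluation in Appendix~\ref{Proof3}, equation~(\ref{R1}), so you are front-loading work the paper does later anyway. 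Two remarks on your ``main obstacle.'' First, the map $\mbb{E}_{\mc{D}_n}$ is by definition the compression onto the block-diagonal, so $\eta(\mc{D}_n)\subseteq\mc{D}_n$ holds trivially; the substantive question is whether the discarded off-block entries of $\mbb{E}[\mb{\wt G}^{\dag}\mb{D}_2\mb{\wt G}]$ (the cross-UE covariances) actually vanish, since otherwise $\mb{\wt L}_{\mb{B}}$ is semicircular only over a larger algebra whose covariance map does not restrict to $\mc{D}_n$. Second, you are right that pilot contamination makes the estimates of pilot-sharing UEs correlated, and this is not resolved by either argument: the paper's substitution $\mb{\wt H}_{kl}=\mb{\check{C}}_{kl}^{1/2}\mb{X}_{kl}$ with $\mb{X}_{kl}$ independent across $k$ silently assumes away both the cross-UE coupling and the transmit-side correlation within each UE. So your proposal is more candid about where the real difficulty lies, but like the paper it does not close that gap; to make either argument rigorous one would need to group pilot-sharing UEs into common diagonal blocks of $\mc{D}_n$ or show the cross terms are asymptotically negligible.
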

\begin{proof}
	The proof of Proposition~\ref{prop2} is given in Appendix~\ref{Proof2}.
\end{proof}

Based on Proposition~\ref{prop2}, the limiting spectral distribution of $\mb{L}_\mb{B}$ can be determined by the operator-valued free additive convolution of $\mb{\ob{L}}_{\mb{B}}$ and $\mb{\wt L}_\mb{B}$. Thus, the operator-valued Cauchy transform $\mc{G}_{\mb{L}_{\mb{B}}}^{\mc{D}_n}(\mb{\Lambda}_n(z))$ can be obtained by using the subordination formula~\cite{belinschi2017analytic}, which is given as follows
\begin{equation} \label{sub1}
	\begin{aligned}
		\mc{G}_{\mb{L}_{\mb{B}}}^{\mc{D}_n}(\mb{\Lambda}_n(z)) &= \mc{G}_{\ob{\mb{L}}_{\mb{B}}}^{\mc{D}}\left(\mb{\Lambda}_n(z) - \mc{R}_{\widetilde{\mb{L}}_{\mb{B}}}^{\mc{D}_n}\left(\mc{G}_{\mb{L}_{\mb{B}}}^{\mc{D}_n}(\mb{\Lambda}_n(z))\right)\right), \\
		&= \mbb{E}_{\mc{D}_n}\left[\left(\mb{\Lambda}_n(z) - \mc{R}_{\widetilde{\mb{L}}_{\mb{B}}}^{\mc{D}_n}\left(\mc{G}_{\mb{L}_{\mb{B}}}^{\mc{D}_n}(\mb{\Lambda}_n(z))\right) - \ob{\mb{L}}_{\mb{B}}\right)^{-1}\right],
	\end{aligned}
\end{equation}
where $\mc{R}_{\widetilde{\mb{L}}_{\mb{B}}}^{\mc{D}_n}(\cdot)$ denotes the $\mc{D}_n$-valued $R$-transform of ${\mb{L}}_{\mb{B}}$. Therefore, based on the identity~(\ref{iden1}), the expression of the Cauchy transform $\mc{G}_{\mb{B}}(z)$ can be obtain through the expression of $\mc{G}_{\mb{L}_{\mb{B}}}^{\mc{D}_n}(\mb{\Lambda}_n(z))$, which is given by the following proposition.
\begin{proposition} \label{prop3}
	The expression of Cauchy transform $\mc{G}_{\mb{B}}(z)$ is given by
	\begin{equation} \label{equGB}
		\mc{G}_{\mb{B}}(z) = \frac{1}{N_{\rm{tot}}}{\rm{Tr}}\left(\mc{G}_{\mb{D}_1}(z)\right),
	\end{equation}
where $\mc{G}_{\mb{D}_1}(z)$ satisfies the following matrix-valued fix-point equations
\begin{align}
	\mc{G}_{\mb{D}_1}(z) &= \left(\mb{\Psi}(z)-\mb{\ob G}^{\dag}\wt{\mb{\Psi}}^{-1}(z)\mb{\ob G}\right)^{-1},\\
	\mc{G}_{\mb{D}_2}(z) &= \left(\mb{\wt \Psi}(z)-\mb{\ob G}{\mb{\Psi}}^{-1}(z)\mb{\ob G}^{\dag}\right)^{-1},
\end{align}
The matrix-valued function $\mb{\Psi}(z)$ and $\mb{\wt{\Psi}}(z)$ are respectively denoted as
\begin{align}
	\mb{\Psi}(z)&=z\mb{I}_{N_{\rm{tot}}}-\wt{\delta}(\mc{G}_{\mb{D}_2}(z)), \\
	\mb{\wt{\Psi}}(z)&=\mb{I}_{M}-\sum_{i=1}^{K}\delta_i(\mc{G}_{\mb{D}_{1i}}(z)).
\end{align}
The functions of one-sided correlation matrices are given by
\begin{align}
	{\delta}_i(\mc{G}_{\mb{D}_{1i}}(z)) &={\eta}_{il}\left((\mb{\wt C}_{l}^\prime+\sigma^2\mb{I}_{M})^{-\frac{1}{2}}, \mc{G}_{\mb{D}_{1i}}(z),\mb{P}_i\right),
	\\
	\wt{\delta}(\mc{G}_{\mb{D}_2}(z))&={\rm{blkdiag}}\left(\wt{\eta}_{kl}\left((\mb{\wt C}_{l}^\prime+\sigma^2\mb{I}_{M})^{-\frac{1}{2}}, \mc{G}_{\mb{D}_2}(z), \mb{P}_k\right)\right)_{1\leq k \leq K}.
\end{align}
where $\{\mc{G}_{\mb{D}_{1i}}(z)\}_{1\leq i \leq K}$ is the $i$-th diagonal matrix block of $\mc{G}_{\mb{D}_{1}}(z)$ of dimension $N\times N$.
\end{proposition}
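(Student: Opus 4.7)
The plan is to start from the subordination formula~(\ref{sub1}), which is already established by Proposition~\ref{prop2}, and then reduce the operator-valued fixed-point equation to the explicit matrix relations claimed in the statement by exploiting the block structure of $\mc{D}_n$ together with the known covariance map of the semicircular element $\mb{\wt L}_{\mb B}$. The final identification of $\mc{G}_{\mb B}(z)$ with $\tfrac{1}{N_{\rm tot}}{\rm Tr}(\mc{G}_{\mb D_1}(z))$ then follows from~(\ref{iden1}).

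The first step is to compute the $\mc{D}_n$-valued $R$-transform of $\mb{\wt L}_{\mb B}$. Since $\mb{\wt L}_{\mb B}$ is $\mc{D}_n$-semicircular by Proposition~\ref{prop2}, its $R$-transform is linear and coincides with its covariance map, namely $\mc{R}^{\mc{D}_n}_{\mb{\wt L}_{\mb B}}(\mb K)=\mbb{E}_{\mc{D}_n}[\mb{\wt L}_{\mb B}\mb K\mb{\wt L}_{\mb B}]$ for all $\mb K\in\mc{D}_n$. Writing $\mb K$ in the block form of~(\ref{K1}) and using the off-diagonal block structure of $\mb{\wt L}_{\mb B}$, a direct expansion gives
\begin{equation*}
\mbb{E}_{\mc{D}_n}[\mb{\wt L}_{\mb B}\mb K\mb{\wt L}_{\mb B}]=\begin{bmatrix}\mbb{E}_{\mc{D}_n}[\mb{\wt G}^{\dag}\mb D_2\mb{\wt G}] & \mb 0\\ \mb 0 & \mbb{E}[\mb{\wt G}\mb D_1\mb{\wt G}^{\dag}]\end{bmatrix}.
\end{equation*}
I then recognize the two blocks as the parameterized one-sided correlation functions introduced in Section~\ref{secModel}: the lower-right block decomposes across users as $\sum_{i=1}^K\delta_i(\mb D_{1i})$ with $\delta_i$ given by $\eta_{il}$ evaluated at $\mb L=(\mb{\wt C}_l^{\prime}+\sigma^2\mb I_M)^{-1/2}$ and $\mb R=\mb P_i$, while the upper-left block, after projecting onto the required block-diagonal sub-algebra (each diagonal block being $N\times N$), yields $\wt\delta(\mb D_2)$ built from $\wt\eta_{kl}$.

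The second step is to insert this linear $R$-transform and the block-diagonal Cauchy transform $\mc{G}^{\mc{D}_n}_{\mb{L}_{\mb B}}(\mb\Lambda_n(z))=\mathrm{blkdiag}(\mc{G}_{\mb D_1}(z),\mc{G}_{\mb D_2}(z))$ back into~(\ref{sub1}). With the definitions of $\mb\Psi(z)$ and $\mb{\wt\Psi}(z)$, the kernel to be inverted takes the compact $2\times 2$ block form
\begin{equation*}
\mb\Lambda_n(z)-\ob{\mb L}_{\mb B}-\mc{R}^{\mc{D}_n}_{\mb{\wt L}_{\mb B}}\!\bigl(\mc{G}^{\mc{D}_n}_{\mb{L}_{\mb B}}\bigr)=\begin{bmatrix}\mb\Psi(z) & -\ob{\mb G}^{\dag}\\ -\ob{\mb G} & \mb{\wt\Psi}(z)\end{bmatrix}.
\end{equation*}
Applying the standard Schur-complement block-inversion identity to this matrix produces the two claimed fixed-point expressions for $\mc{G}_{\mb D_1}(z)$ and $\mc{G}_{\mb D_2}(z)$, with the off-diagonal blocks vanishing after the projection $\mbb{E}_{\mc{D}_n}$. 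Finally, identity~(\ref{iden1}) immediately gives $\mc{G}_{\mb B}(z)=\tfrac{1}{N_{\rm tot}}{\rm Tr}(\mc{G}_{\mb D_1}(z))$, closing the argument.

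The main obstacle I anticipate is the careful bookkeeping in step one: verifying that the covariance map $\mbb{E}_{\mc{D}_n}[\mb{\wt L}_{\mb B}(\cdot)\mb{\wt L}_{\mb B}]$ indeed maps the block-diagonal slots of $\mc{G}^{\mc{D}_n}_{\mb{L}_{\mb B}}$ into precisely the matrix-valued functionals $\delta_i$ and $\wt\delta$ that were defined earlier, rather than producing spurious off-diagonal contamination. This requires writing $\mb{\wt G}=(\mb{\wt C}_l^{\prime}+\sigma^2\mb I_M)^{-1/2}\mb{\wt H}_l\mb P$ column-by-column, expanding the expectations entrywise in terms of $\mb{\wh C}_{kl}^{ji}$ and $\mb\Upsilon_{kl}^{ji}$ as in the definitions of $\eta_{kl}$ and $\wt\eta_{kl}$, and checking that cross-user expectations vanish because the estimation-error blocks across users sharing pilots have already been absorbed into $\mb{\wh C}_{kl}$. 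Once this identification is made cleanly, the rest of the derivation is essentially a single application of the Schur complement.
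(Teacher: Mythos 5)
Your proposal is correct and follows essentially the same route as the paper's own proof: it reduces the $\mc{D}_n$-valued $R$-transform of the semicircular part to its covariance map, identifies the two diagonal blocks of $\mbb{E}_{\mc{D}_n}[\mb{\wt L}_{\mb B}\mb K_n\mb{\wt L}_{\mb B}]$ with $\wt{\delta}(\mb D_2)$ and $\sum_i\delta_i(\mb D_{1i})$, substitutes into the subordination formula, and extracts the fixed-point equations via the Schur-complement block inversion before invoking~(\ref{iden1}). The block computations you display match the paper's equations~(\ref{R1}) and the inversion step in Appendix~\ref{Proof3} exactly, so no further comparison is needed.
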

\begin{proof}
	The proof of Proposition~\ref{prop3} is given in Appendix~\ref{Proof3}.
\end{proof}
Therefore, the Cauchy transform $\mc{G}_{\mb{B}}(z)$ can be obtained by (\ref{iden1}) and (\ref{equGB}). In addition, according to \cite{dobriban2020wonder}, the entries of the matrix $(z\mb{I}_{N_{\rm{tot}}}-\mb{B}_l)^{-1}$ can be approximated by the entries of $\mc{G}_{\mb{D}_1}(z)$ in~(\ref{equGB}) and the expression of the matrices $\mb{B}_{l},l=1,\dots,L$ in (\ref{matrixA}) can be then obtained.

\subsection{Cauchy transform $\mc{G}_{\mb{\wt B},\mb{\Xi}}(z)$}
Recall that $\mc{G}_{\mb{\wt{B}},\mb{\Xi}} =\frac{1}{M}\mbb{E}\left\{\mathrm{Tr}(z\mb{I}_{M}-\mb{\wt B})^{-1}\mb{\Xi}\right\}$ with $\mb{\wt{B}}=\mb{\wh H}_l\mb{P}\mb{P}^{\dag}\mb{\wh H}_l^{\dag}+\mb{\wt{C}}_{l}^{\prime}$, we adopt the similar procedure as the previous subsection. We will first resort to the linearization trick and then the operator-valued free additive convolution, which will be detailed as follows.

Based on the Anderson’s linearization trick, we can construct the block matrix $\mb{L}_{\mb{\wt B}}\in \mc{M}_{\wt n}$, where $\mc{M}_{\wt n}$ denotes the algebra of $\wt n\times \wt n$ complex random matrices and $\wt n = N_{\rm{tot}}+2 \times M$, as follows
\begin{equation} \label{C1}
	\mb{L}_{\mb{\wt B}} = 
	\left[\begin{array}{c:c}
		\mb{L}_{\mb{\wt B}}^{(1,1)} & \mb{L}_{\mb{\wt B}}^{(1,2)} \\ \hdashline
		\mb{L}_{\mb{\wt B}}^{(2,1)} & \mb{L}_{\mb{\wt B}}^{(2,2)} \\
	\end{array}\right]
		=
	\left[\begin{array}{c:ccc}
		\mb{0}_{M} & (\mb{\wt{C}}_{l}^{\prime})^{\frac{1}{2}} & \mb{\wh{H}}_l\mb{P}  \\ \hdashline
		 (\mb{\wt{C}}_{l}^{\prime})^{\frac{1}{2}}& -\mb{I}_{M} & \mb{0}_{{ M\times N_{\rm{tot}}}} \\
		\mb{P}^{\dag}\mb{\wh{H}}^{\dag}_l & \mb{0}_{ {N_{\rm{tot}}}\times M} & -\mb{I}_{{N_{\rm{tot}}}}
	\end{array}\right],
\end{equation} 
where the matrix blocks $\mb{L}_{\mb{\wt B}}^{(i,j)}$ corresponds to the partitions shown on the RHS of~(\ref{C1}).

In addition, we define the sub-algebra $\mc{D}_{\wt n}\subset \mc{M}_{\wt n}$ as the ${\wt n}\times {\wt n}$ block diagonal matrix. For each $\mb{K}_{\wt n}\in \mc{D}_{\wt n}$, it is defined as
\begin{equation} \label{K2}
	\mb{K}_{\wt n}=
	\begin{bmatrix}
		\mb{\wt D} & \mb{0}_{M} &  \mb{0}_{M \times N_{\rm{tot}}}\\ 
		\mb{0}_{M} & \mb{0}_{M} & \mb{0}_{{M\times N_{\rm{tot}}}} \\ 
		\mb{0}_{N_{\rm{tot}} \times M} & \mb{0}_{{N_{\rm{tot}}}\times M} & \mb{D}
	\end{bmatrix},
\end{equation}
where $\mb{\wt D}$ is a $M\times M$ sub-matrix and the $N_{\rm{tot}}\times N_{\rm{tot}}$ block diagonal matrix $\mb{D}$ is defined as $\mb{ D}={\mathrm{blkdiag}}\left(\mb{D}_{1},\ldots,\mb{D}_{K}\right)$ with $\{{\mb{D}_{i}}\}_{1\leq i\leq K}$ being $N\times N$ sub-matrices.

The $\mc{D}_{\wt n}$-valued Cauchy transform is defined as
\begin{equation}
	\mc{G}_{\mb{L}_{\mb{\wt{B}}}}^{\mc{D}_{\wt n}}(\mb{\Lambda}_{\wt n}(z))= \mbb{E}_{\mc{D}_{\wt n}}\left[\left(\mb{\Lambda}_{\wt n}(z) - \mb{L}_{\mb{\wt{B}}}\right)^{-1}\right],
\end{equation}
where $\mbb{E}_{\mc{D}_{\wt n}}[\mb{X}]$ has the same structure as $\mb{K}_{\wt n}$ with $\mb{X}\in \mc{M}_{\wt n}$, which is defined as
\begin{equation}
	\mbb{E}_{\mc{D}_{\wt n}} = 
	\begin{bmatrix}
		\mbb{E}[\mb{\wt D}] & \mb{0}_{M} &  \mb{0}_{M \times N_{\rm{tot}}} \\
		\mb{0}_{M} & \mb{0}_{M} & \mb{0}_{{M\times N_{\rm{tot}}}} \\
		\mb{0}_{N_{\rm{tot}} \times M} & \mb{0}_{{N_{\rm{tot}}}\times M} & \mbb{E}[\mb{D}]
	\end{bmatrix},
\end{equation} 
where $\mbb{E}[\mb{D}] = {\mathrm{blkdiag}}\left(\mbb{E}[\mb{D}_{1}],\ldots,\mbb{E}[\mb{D}_{K}]\right)$. In addition, $\mb{\Lambda}_{\wt n}(z)$ denotes the $\wt n\times \wt n$ matrix as
\begin{align}
	\mb{\Lambda}(z) = \begin{bmatrix}
		z\mb{I}_{M} & \mb{0}_{M} &  \mb{0}_{M \times N_{\rm{tot}}} \\
		\mb{0}_{M} & \mb{0}_{M} & \mb{0}_{{M\times N_{\rm{tot}}}} \\ 
		\mb{0}_{N_{\rm{tot}} \times M} & \mb{0}_{{N_{\rm{tot}}}\times M} & \mb{0}_{N_{\rm{tot}}}
	\end{bmatrix}.
\end{align}
By using the block matrix inversion lemma, we can obtain 
\begin{equation} \label{eqBL}
	\begin{aligned}
		\mc{G}_{\mb{L}_{\mb{\wt B}}}^{\mc{D}_{\wt n}}(\mb{\Lambda}_{\wt n}(z)) &=  \mbb{E}_{\mc{D}_{\wt n}}\begin{bmatrix}
			\left(z \mb{I}_{M} + \mb{L}_{\mb{\wt B}}^{(1,2)}\left(\mb{L}_{\mb{\wt B}}^{(2,2)}\right)^{-1}\mb{L}_{\mb{\wt B}}^{(2,1)}\right)^{-1} & \mb{L}_{\mb{\wt B}}^{(1,2)}\left(-z\mb{L}_{\mb{\wt B}}^{(2,2)} - \mb{L}_{\mb{\wt B}}^{(2,1)}\mb{L}_{\mb{\wt B}}^{(1,2)}\right)^{-1} \\
			\left(z\mb{L}_{\mb{\wt B}}^{(2,2)} - \mb{L}_{\mb{\wt B}}^{(2,1)}\mb{L}_{\mb{\wt B}}^{(1,2)}\right)^{-1}\mb{L}_{\mb{\wt B}}^{(2,1)} & -\left(\mb{L}_{\mb{\wt B}}^{(2,2)} + z^{-1}\mb{L}_{\mb{\wt B}}^{(2,1)}\mb{L}_{\mb{\wt B}}^{(1,2)}\right)^{-1}
		\end{bmatrix}.
	\end{aligned}
\end{equation}
In particular, the upper-left block of (\ref{eqBL}) can be explicitly written as
\begin{equation}
	\left(z \mb{I}_{M} + \mb{L}_{\mb{\wt B}}^{(1,2)}\left(\mb{L}_{\mb{\wt B}}^{(2,2)}\right)^{-1}\mb{L}_{\mb{\wt B}}^{(2,1)}\right)^{-1} = \left(z \mb{I}_{M} - \mb{\wh H}_l\mb{P}\mb{P}^{\dag}\mb{\wh H}_l^{\dag}-\mb{\wt{C}}_{l}^{\prime}\right)^{-1}.
\end{equation}
Thus, the Cauchy transform of $\mb{\wt B}$ is given by
\begin{equation}
	\mc{G}_{\mb{\wt{B}},\mb{\Xi}} =\frac{1}{M} \mathrm{tr}\left(\left\{\mc{G}_{\mb{L}_{\mb{\wt B}}}^{\mc{D}_{\wt n}}(\mb{\Lambda}_{\wt n}(z))\right\}^{(1,1)}\mb{\Xi}\right),
\end{equation}
where $\{\cdot\}^{(1,1)}$ denotes the upper-left $M\times M$ matrix block. 

Following the similar procedure, we separate $\mb{L}_{\mb{\wt B}}$ into $\mb{\ob L}_\mb{\mb{\wt B}}$ and $\mb{\wt L}_\mb{\mb{\wt B}}$, which are defined as 
\begin{align}
	\mb{\ob L} = 
	\left[\begin{array}{c:cc}
		\mb{0}_{M} & (\mb{\wt{C}}_{l}^{\prime})^{\frac{1}{2}} &  \mb{\ob{H}}_l\mb{P}\\ \hdashline
		(\mb{\wt{C}}_{l}^{\prime})^{\frac{1}{2}}& -\mb{I}_{M} & \mb{0}_{{M\times N_{\rm{tot}}}} \\ 
		\mb{P}^{\dag}\mb{\ob{H}}_l^{\dag} & \mb{0}_{{N_{\rm{tot}}}\times M} & -\mb{I}_{{N_{\rm{tot}}}}
	\end{array}\right] , \quad
	\mb{\wt L} = 
	\left[\begin{array}{c:cc}
		\mb{0}_{M} & \mb{0}_{M} &  \mb{\wt{H}}_l\mb{P}\\ \hdashline
		\mb{0}_{M} & \mb{0}_{M} & \mb{0}_{{M\times N_{\rm{tot}}}} \\ 
		\mb{P}^{\dag}\mb{\wt{H}}_l^{\dag} & \mb{0}_{{N_{\rm{tot}}}\times M} & \mb{0}_{{N_{\rm{tot}}}}
	\end{array}\right],
\end{align}
where $\mb{L}_{\mb{\wt B}}=\mb{\ob L}_\mb{\mb{\wt B}}+\mb{\wt L}_\mb{\mb{\wt B}}$.  
\begin{proposition} \label{prop4}
	The random variable $\mb{\wt L}_\mb{\mb{\wt B}}$ is semicircular and free from $\mb{\ob L}_\mb{\mb{\wt B}}$ over $\mc{D}_{\wt n}$.
\end{proposition}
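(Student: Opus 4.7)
The plan is to mirror the argument used to establish Proposition~\ref{prop2}, adapting it to the expanded $3\times 3$ block structure of the linearization in~(\ref{C1}). Observe that the deterministic correlation factor $(\mb{\wt{C}}_{l}^{\prime})^{\frac{1}{2}}$, the LoS contribution $\mb{\ob H}_l\mb{P}$, and the identity blocks have all been absorbed into $\mb{\ob L}_{\mb{\wt B}}$, while $\mb{\wt L}_{\mb{\wt B}}$ retains only the off-diagonal blocks $\mb{\wt H}_l\mb{P}$ and its conjugate transpose. Because $\mb{\wt H}_l$ inherits jointly Gaussian entries from the underlying ensemble $\mb{W}_{kl}$ (modulated only by the deterministic Kronecker factors $\mb{R}_{kl}^{1/2}$ and $\mb{T}_{kl}^{1/2}$), the matrix $\mb{\wt L}_{\mb{\wt B}}$ is self-adjoint and centered Gaussian at the entry level, which is the key prerequisite for operator-valued semicircularity.

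To establish $\mc{D}_{\wt n}$-valued semicircularity, I would first compute the variance map $\eta_{\mb{\wt B}}(\mb{K}_{\wt n}) \triangleq \mbb{E}_{\mc{D}_{\wt n}}[\mb{\wt L}_{\mb{\wt B}}\,\mb{K}_{\wt n}\,\mb{\wt L}_{\mb{\wt B}}]$ for arbitrary $\mb{K}_{\wt n}\in\mc{D}_{\wt n}$ of the form in~(\ref{K2}). Direct block multiplication shows that the map preserves $\mc{D}_{\wt n}$; its action on the upper-left $M\times M$ block reduces, after taking expectation over $\mb{\wt H}_l$, to a sum of the form $\sum_{i=1}^{K}\eta_{il}(\mb{I}_{M},\mb{D}_i,\mb{P}_i)$, while its action on the lower-right $N_{\rm{tot}}\times N_{\rm{tot}}$ diagonal block reduces blockwise to $\wt{\eta}_{kl}(\mb{I}_{M},\mb{\wt D},\mb{P}_k)$. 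Higher-order moments of the form $\mbb{E}_{\mc{D}_{\wt n}}[\mb{\wt L}_{\mb{\wt B}}\mb{K}_{\wt n}^{(1)}\cdots\mb{K}_{\wt n}^{(2k-1)}\mb{\wt L}_{\mb{\wt B}}]$ would then be expanded via Wick's theorem, and the standard argument that crossing pair partitions are asymptotically negligible delivers the operator-valued free CLT characterization of $\mb{\wt L}_{\mb{\wt B}}$ as semicircular with covariance $\eta_{\mb{\wt B}}$.

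Freeness from $\mb{\ob L}_{\mb{\wt B}}$ would then follow from the observation that $\mb{\ob L}_{\mb{\wt B}}$ is purely deterministic and therefore belongs to the algebra generated by $\mc{D}_{\wt n}$ together with the deterministic off-diagonal entries. Invoking the classical result that a Gaussian semicircular family is asymptotically $\mc{D}_{\wt n}$-free from any family of deterministic matrices~\cite{belinschi2017analytic} closes the argument and legitimizes the subordination formula analogous to~(\ref{sub1}) used in the sequel to extract $\mc{G}_{\mb{L}_{\mb{\wt B}}}^{\mc{D}_{\wt n}}(\mb{\Lambda}_{\wt n}(z))$.

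The main obstacle will be the careful bookkeeping in the $3\times 3$ block setting: the middle block that carries $(\mb{\wt{C}}_{l}^{\prime})^{\frac{1}{2}}$ in $\mb{\ob L}_{\mb{\wt B}}$ has zero partners in $\mb{\wt L}_{\mb{\wt B}}$, but one must verify that no spurious cross-terms in the variance map arise from this expanded structure and that the operator-valued covariance indeed lands inside $\mc{D}_{\wt n}$ rather than in a larger subalgebra. Once this verification is carried out, the remainder of the proof is a direct structural analogue of the Proposition~\ref{prop2} argument, and the fixed-point equations for $\mc{G}_{\mb{\wt B},\mb{\Xi}}(z)$ will emerge by the same pattern as those derived in Proposition~\ref{prop3}.
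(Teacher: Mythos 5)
Your proposal is correct in substance and reaches the same conclusion, but it takes a genuinely different route from the paper. The paper's proof (mirroring Appendix~\ref{Proof2}) is structural: it decomposes $\mb{\wt L}_{\mb{\wt B}}=\sum_{k=1}^{K}\mb{\wt L}_{\mb{\wt B},k}$ into per-user blocks, uses the representation $\mb{\wt H}_{kl}=(\mb{\check{C}}_{kl})^{1/2}\mb{X}_{kl}$ to factor each block as a standard complex Gaussian (Wigner-type) matrix $\mc{\wt X}_k$ sandwiched by deterministic matrices $\mc{\wt A}_k$, and then outsources both the semicircularity and the freeness from deterministic elements to the argument of \cite[Appendix B]{lu2016free}. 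You instead argue from first principles: compute the $\mc{D}_{\wt n}$-valued covariance map $\mbb{E}_{\mc{D}_{\wt n}}[\mb{\wt L}_{\mb{\wt B}}\mb{K}_{\wt n}\mb{\wt L}_{\mb{\wt B}}]$, check it lands in $\mc{D}_{\wt n}$, and invoke Wick calculus plus the vanishing of crossing partitions for the operator-valued CLT, followed by asymptotic freeness of a Gaussian family from deterministic matrices. Your covariance computation is consistent with the paper (it is exactly the $R$-transform~(\ref{R2}) appearing in the proof of Proposition~\ref{prop5}), and your worry about the middle block is unfounded since that block of $\mb{\wt L}_{\mb{\wt B}}$ is zero, so no cross-terms survive $\mbb{E}_{\mc{D}_{\wt n}}$. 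The paper's factorization buys a short proof by citation and exhibits the per-user structure reused later; your route is self-contained but requires actually carrying out the moment expansion you only sketch. Two minor cautions: $\mb{\wt H}_{kl}$ is the zero-mean Gaussian part of the MMSE \emph{estimate} with covariance $\mb{\wh C}_{kl}$, not simply the Kronecker-modulated $\mb{W}_{kl}$, though joint Gaussianity is all you need; and your (and the paper's) decomposition of the covariance map as a sum over users tacitly assumes the $\{\mb{\wt H}_{il}\}_i$ are uncorrelated across users, which is delicate under pilot contamination --- a point the paper glosses over as well.
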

\begin{proof}
	The proof of Proposition~\ref{prop4} is given in Appendix~\ref{Proof4}.
\end{proof}

Based on Proposition~\ref{prop4}, the limiting spectral distribution of $\mb{L}_\mb{\wt B}$ can be determined by the operator-valued free additive convolution of $\mb{\ob{L}}_{\mb{\wt B}}$ and $\mb{\wt L}_\mb{\wt B}$. Therefore, the operator-valued Cauchy transform $\mc{G}_{\mb{L}_{\mb{\wt B}}}^{\mc{D}_{\wt n}}(\mb{\Lambda}_{\wt n}(z))$ can be obtained by using the subordination formula, which is given as follows
\begin{equation}\label{sub2}
	\mc{G}_{\mb{L}_{\mb{\wt B}}}^{\mc{D}_{\wt n}}(\mb{\Lambda}_{\wt n}(z))= \mbb{E}_{\mc{D}_{\wt n}}\left[\left(\mb{\Lambda}_{\wt n}(z) - \mc{R}_{\widetilde{\mb{L}}_{\wt B}}^{\mc{D}_{\wt n}}\left(\mc{G}_{\mb{L}_{\mb{\wt B}}}^{\mc{D}_{\wt n}}(\mb{\Lambda}_{\wt n}(z))\right) - \ob{\mb{L}}_{\mb{\wt B}}\right)^{-1}\right],
%	\begin{aligned}
%		\mc{G}_{\mb{L}_{\mb{\wt B}}}^{\mc{D}_{\wt n}}(\mb{\Lambda}_{\wt n}(z)) &= \mc{G}_{\ob{\mb{L}}_{\wt B}}^{\mc{D}_{\wt n}}\left(\mb{\Lambda}_{\wt n}(z) - \mc{R}_{\widetilde{\mb{L}}_{\wt B}}^{\mc{D}_{\wt n}}\left(\mc{G}_{\mb{L}_{\mb{\wt B}}}^{\mc{D}_{\wt n}}(\mb{\Lambda}_{\wt n}(z))\right)\right) \\
%		&= \mbb{E}_{\mc{D}_{\wt n}}\left[\left(\mb{\Lambda}_{\wt n}(z) - \mc{R}_{\widetilde{\mb{L}}_{\wt B}}^{\mc{D}_{\wt n}}\left(\mc{G}_{\mb{L}_{\mb{\wt B}}}^{\mc{D}_{\wt n}}(\mb{\Lambda}_{\wt n}(z))\right) - \ob{\mb{L}}_{\mb{\wt B}}\right)^{-1}\right],
%	\end{aligned}
\end{equation}
where $\mc{R}_{\wt{\mb{L}}_{\mb{\wt B}}}^{\mc{D}_{\wt n}}(\cdot)$ denotes the $\mc{D}_{\wt n}$-valued $R$-transform of ${\mb{L}}_{\mb{\wt B}}$. Then, the expression of  the Cauchy transform $\mc{G}_{\mb{\wt{B}},\mb{\Xi}}$ can be obtain by the following proposition.
\begin{proposition}\label{prop5}
	The expression of Cauchy transform $\mc{G}_{\mb{\wt B},\mb{\Xi}}(z)$ is given by
	\begin{equation}
		\mc{G}_{\mb{\wt{B}},\mb{\Xi}} =\frac{1}{M} \mathrm{tr}\left(\mc{G}_{\mb{\wt D}}(z)\mb{\Xi}\right),
	\end{equation}
where
\begin{align}
	\mc{G}_{\mb{\wt D}}(z) &= \left( \mb{\Phi}(z)- \mb{\wt{C}}_{l}^{\prime} -  \mb{\ob{H}}_l\mb{P}\mb{\wt \Phi}^{-1}(z)\mb{P}^{\dag}\mb{\ob{H}}_l^{\dag}\right)^{-1}, \label{equGD}\\
	\mc{G}_{\mb{D}}(z)&=\left( \mb{\wt \Phi}(z)-\mb{P}^{\dag}\mb{\ob{H}}_l^{\dag}\mb{\Phi}^{-1}(z)\mb{\ob{H}}_l\mb{P}-\mb{\Upsilon}(z)\right)^{-1}, \label{equGDT}
\end{align}
The matrix-valued function $\mb{\Phi}(z)$, $\mb{\wt \Phi}(z)$ and $\mb{\Upsilon}(z)$ are respectively denoted as 
\begin{align}
	\mb{\Phi}(z)&=z\mb{I}_{M}-\sum_{i=1}^{K}\bm{\zeta}_{i}({{\mc{G}_{\mb{D}_{i}}(z)}}),\\
	\mb{\wt \Phi}(z)&=\mb{I}_{N_{\rm{tot}}}-\bm{\wt \zeta}(\mb{\mc{G}_{\mb{\wt D}}(z)}),\\
	\mb{\Upsilon}(z) &= \mb{P}^{\dag}\mb{\ob{H}}_l^{\dag}\mb{\Phi}^{-1}(z)(\mb{\wt{C}}_{l}^{\prime})^{\frac{1}{2}}\left( \mb{I}_{M}-(\mb{\wt{C}}_{l}^{\prime})^{\frac{1}{2}}\mb{\Phi}^{-1}(z)(\mb{\wt{C}}_{l}^{\prime})^{\frac{1}{2}}\right)^{-1}(\mb{\wt{C}}_{l}^{\prime})^{\frac{1}{2}}\mb{\Phi}^{-1}(z)\mb{\ob{H}}_l\mb{P},
\end{align}
where the functions of one-sided correlation matrices are given by
\begin{align}
	\bm{\wt \zeta}(\mc{G}_{\mb{\wt D}}(z))&={\rm{blkdiag}}\left(\wt{\eta}_{1l}(\mb{I}_{M}, \mc{G}_{\mb{\wt D}}(z), \mb{P}_1),\ldots, \wt{\eta}_{Kl}(\mb{I}_{M},\mc{G}_{\mb{\wt D}}(z),\mb{P}_K)\right),\\
	\bm{\zeta}_{i}(\mc{G}_{\mb{D}_i}(z))&={\eta}_{il}(\mb{I}_{M},\mc{G}_{\mb{D}_i}(z),\mb{P}_i),
\end{align}
and $\{\mc{G}_{\mb{D}_i}(z)\}_{1\leq i \leq K}$ is the $i$-th diagonal matrix block of $\mc{G}_{\mb{D}}(z)$ of dimension $N\times N$. 
\end{proposition}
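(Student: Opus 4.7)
The plan is to mirror the derivation of Proposition~\ref{prop3}: substitute the operator-valued subordination formula (\ref{sub2}) into the block-matrix representation of $\mc{G}_{\mb{L}_{\mb{\wt B}}}^{\mc{D}_{\wt n}}(\mb{\Lambda}_{\wt n}(z))$, explicitly evaluate the $\mc{D}_{\wt n}$-valued $R$-transform $\mc{R}_{\wt{\mb L}_{\mb{\wt B}}}^{\mc{D}_{\wt n}}$ using Proposition~\ref{prop4}, and then extract the required $M\times M$ upper-left block via a three-block Schur-complement reduction. The linearization identity already guarantees that this upper-left block equals $\mbb{E}[(z\mb I_M - \mb{\wt B})^{-1}]$, so the proposition will follow once $\mc{G}_{\mb{\wt D}}(z)$ is identified with it.

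First, because $\wt{\mb L}_{\mb{\wt B}}$ is $\mc{D}_{\wt n}$-semicircular by Proposition~\ref{prop4}, its $R$-transform equals the covariance mapping $\mbb{E}_{\mc{D}_{\wt n}}[\wt{\mb L}_{\mb{\wt B}}\,(\cdot)\,\wt{\mb L}_{\mb{\wt B}}]$. The only non-zero entries of $\wt{\mb L}_{\mb{\wt B}}$ are the off-diagonal pair $\wt{\mb H}_l \mb P$ in position $(1,3)$ and its adjoint in position $(3,1)$, so the image of the covariance mapping is supported on the $(1,1)$ and $(3,3)$ diagonal blocks. I would match these two contributions against the parameterized one-sided correlation functions $\eta_{il}$ and $\wt\eta_{kl}$ from Section~\ref{subsecchanModel}, evaluated with $\mb{L}=\mb I_M$ and $\mb{R}=\mb P_k$; this produces exactly $\sum_{i=1}^K \bm\zeta_i(\mc{G}_{\mb{D}_i}(z))$ in the $(1,1)$ block and $\bm{\wt\zeta}(\mc{G}_{\mb{\wt D}}(z))$ in the $(3,3)$ block, thereby identifying $\mb\Phi(z)$ and $\mb{\wt\Phi}(z)$ of the proposition.

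Second, substituting this $R$-transform into (\ref{sub2}) reduces $\mc{G}_{\mb{L}_{\mb{\wt B}}}^{\mc{D}_{\wt n}}(\mb{\Lambda}_{\wt n}(z))$ to the $\mc{D}_{\wt n}$-expectation of the inverse of the three-block matrix
\begin{equation}
\mb{T}(z) =
\begin{bmatrix}
\mb{\Phi}(z) & -(\mb{\wt{C}}_{l}^{\prime})^{\frac{1}{2}} & -\mb{\ob{H}}_l \mb P \\
-(\mb{\wt{C}}_{l}^{\prime})^{\frac{1}{2}} & \mb{I}_{M} & \mb{0} \\
-\mb P^{\dag}\mb{\ob{H}}_l^{\dag} & \mb{0} & \mb{\wt{\Phi}}(z)
\end{bmatrix}.
\end{equation}
I would then eliminate the middle $M\times M$ block first, using the identity $\mb I_M$ in the $(2,2)$ slot; this reduces the system to a $2\times 2$ block matrix whose $(1,1)$ entry becomes $\mb\Phi(z) - \mb{\wt C}_l'$. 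A single Schur complement in this reduced system directly delivers (\ref{equGD}) as the $(1,1)$ block of $\mb{T}(z)^{-1}$, i.e.\ $\mc{G}_{\mb{\wt D}}(z)$. For the $(3,3)$ block I would take the opposite Schur complement, obtaining $(\mb{\wt\Phi}(z) - \mb P^\dag \mb{\ob{H}}_l^\dag (\mb\Phi(z) - \mb{\wt C}_l')^{-1} \mb{\ob{H}}_l \mb P)^{-1}$, and then apply the Woodbury identity to $(\mb\Phi(z) - \mb{\wt C}_l')^{-1}$ to separate the bare term $\mb P^\dag \mb{\ob{H}}_l^\dag \mb\Phi^{-1}(z) \mb{\ob{H}}_l \mb P$ from a correction that is recognisable as exactly the $\mb\Upsilon(z)$ stated in the proposition, yielding (\ref{equGDT}).

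The main obstacle I anticipate is the three-block Schur-complement bookkeeping together with the Woodbury rearrangement. Unlike the two-block reduction in Proposition~\ref{prop3}, here the auxiliary $M\times M$ row and column carrying $(\mb{\wt C}_l')^{1/2}$ have to be eliminated first, and one must then invoke Woodbury to rewrite the resulting corner Schur complement so that the non-trivial correction $\mb\Upsilon(z)$ emerges as a separate additive term. A secondary point is to verify self-consistency of the resulting coupled fixed-point system for $(\mc{G}_{\mb{\wt D}}(z),\mc{G}_{\mb D}(z))$, which follows from standard contraction arguments for operator-valued subordination, but requires checking that the matrix imaginary parts stay in the appropriate half-plane throughout the iteration.
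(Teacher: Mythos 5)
Your proposal is correct and follows essentially the same route as the paper's proof: invoke Proposition~\ref{prop4} to reduce the $\mc{D}_{\wt n}$-valued $R$-transform to the covariance map (identifying $\mb{\Phi}(z)$ and $\mb{\wt\Phi}(z)$ via the one-sided correlation functions), substitute into the subordination formula~(\ref{sub2}) to obtain the same three-block matrix, and apply block matrix inversion to extract $\mc{G}_{\mb{\wt D}}(z)$ and $\mc{G}_{\mb{D}}(z)$. Your explicit middle-block elimination and Woodbury rearrangement simply spell out the ``block matrix inversion lemma'' step that the paper leaves implicit, and your computation correctly recovers $\mb{\Upsilon}(z)$ as the Woodbury correction term.
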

\begin{proof}
	The proof of Proposition~\ref{prop5} is given in Appendix~\ref{Proof5}.
\end{proof}

Based on Proposition~\ref{prop4} and Proposition~\ref{prop5}, the derivatives of Cauchy transform $\mc{G}^\prime_{\mb{B}_l}(z)$ and $\mc{G}^\prime_{\mb{\wt{B}}_l, \mb{S}_l\mb{S}_l^{\dag}}(z)$ can be obtained by matrix calculus, which is given in Appendix~\ref{Appderivative}. In this way, the explicit expressions of $\mb{v}$, $\mb{A}$ and $\mb{Y}$ can be obtained and the CPU-UAV can calculate the combining weights with the channel statistics, which mitigates the interaction of information.

\section{Numerical Results} \label{secResult}
In the above sections, we proposed the MMSE-OneShot scheme and utilized the operator-valued Cauchy transforms to obtain the asymptotic expressions of the combining weights. In this section, we numerically evaluate the performance of the proposed MMSE-OneShot scheme, where both the explicit and the asymptotic combining weights are calculated and compared. In addition, we investigate the SE of the aerial cell-free systems with various numbers of deployed AP-UAVs and numbers of equipped antennas. The fully-centralized scheme in~\ref{FCScheme} and the small-cell network are simulated as the benchmarks. In small-cell networks, each UE selects the nearest AP-UAV to transmit its signals, and the detection is only implemented in the selected AP-UAV~\cite{wang2022uplink}.

\subsection{Simulation Setup}
We assume that the AP-UAVs are regularly deployed in a square area of $1\times 1$~$\text{km}^2$, and UEs are randomly distributed within this area. The transmitting power of each UE is set as $p_k=23\text{dBm}$ and the noise power in each AP-UAV is $\sigma^2=-94\text{dBm}$. The precoder of each UE is assumed to be a unit matrix, which means that the transmitting power is divided equally between each antenna, i.e. $\mb{F}_k=\mb{P}_k=\sqrt{\frac{p_k}{N}}\mb{I}_N$. The AP-UAVs are assumed to hover at the altitude of $h=100$m, unless otherwise stated, and the height of UE is set as $1.5$m. We consider that each AP-UAV has the same number of antennas, and each coherence block contains $\tau_c = 200$ channel uses and $\tau_p = KN$, unless otherwise stated. 

We consider that the air-to-ground channels consist of LoS and NLoS links with different probability~\cite{alzenad20173}. In specific, the probability of having an LoS connection between the UE $k$ and AP-UAV $l$ is given by
\begin{equation}
	P^{\text{LoS}}_{kl}=\frac{1}{1+a\exp(-b(\frac{180}{\pi}\tan^{-1}(\frac{h}{d_{kl}})-a))},
\end{equation}
where $a$ and $b$ are constants depending on the environments (suburban, urban, and dense urban)~\cite{al2014optimal}, and $d_{kl}$ denotes the distance between AP-UAV $l$ and UE $k$. The probability of NLoS is then denoted by $P^{\text{NLoS}}_{kl}=1-P^{\text{LoS}}_{kl}$.

The path loss model for LoS and NLoS links in dB are respectively
\begin{align}
	\text{PL}_{kl}^{\text{LoS}} &= \text{FSPL}_{kl}+ \varrho_{\text{LoS}},\\
	\text{PL}_{kl}^{\text{NLoS}} &= \text{FSPL}_{kl}+ \varrho_{\text{NLoS}},
\end{align}
where $\text{FSPL}_{kl}=10\log(\beta_{kl})$ is the free space path loss between AP-UAV $l$ and UE $k$ with $\beta_{kl}=\rho/d_{kl}^\alpha$, where $\rho=-55\text{dB}$ is the path loss constant, and $\alpha=3$ is the path loss exponent. The excessive path loss $\varrho_{\text{LoS}}$ and $\varrho_{\text{NLoS}}$ depend on the propagation group (LoS and NLoS) and are given in~\cite{al2014optimal}. Therefore, the averaged path loss between AP-UAV $l$ and UE $k$ is
\begin{equation}
	\text{PL}_{kl} = P^{\text{LoS}}_{kl}\times \text{PL}_{kl}^{\text{LoS}} + P^{\text{NLoS}}_{kl} \times \text{PL}_{kl}^{\text{NLoS}}.
\end{equation}

%The path loss from the $k$-th UE to $l$-th AP-UAV is defined as $\beta_{kl}=\rho/d_{kl}^\alpha$, where $\rho=-55\text{dB}$ is the path loss constant, $\alpha=3$ is the path loss exponent, and $d_{kl}$ is the distance between the $k$-th UE to $l$-th AP-UAV. 

In addition, according to~\cite{zhang2013capacity}, the spatial correlation matrices $\mb{R}_{kl}$ and $\mb{T}_{kl}$ are generated from a uniform linear array with half wavelength spacing, whose $(m,n)$-th element are given as follows
\begin{equation}
	\left[\text{T}~(\text{or}~\mathbf{R})_{kl}\right]_{mn}=\int_{-180}^{180}\frac{d\phi}{\sqrt{2\pi\xi_{kl}^2}}e^{j\pi(m-n)\sin\left(\frac{\pi\phi}{180}\right)-\frac{(\phi-\theta_{kl})^2}{2\xi_{kl}^2}},
\end{equation}
where $\theta_{kl}$ and $\xi_{kl}$ are the mean angle and angular standard deviation (ASD), respectively. The LoS component is generated according to $\mb{\ob{H}}_{kl}=\mb{a}_{{\rm{R}},l}(\theta^{{{R}}}_{kl})\mb{a}_{{\rm{T}},k}(\theta^{{{T}}}_{kl})^{\dag}$, where 
\begin{align}
	\mb{a}_{{{R}},l}(\theta^{{{R}}}_{kl}) &
	=\begin{bmatrix}
		1 & e^{j\pi\sin\left(\frac{{\theta}_{kl}^{{R}}}{180}\pi\right)}&
		\ldots&
		e^{j\pi(M-1)\sin\left(\frac{{\theta}_{kl}^{{R}}}{180}\pi\right)}
	\end{bmatrix}^{\rm T}, \\
	\mb{a}_{{{T}},k}(\theta^{{{T}}}_{kl}) &
	=\begin{bmatrix}
		1 & e^{j\pi\sin\left(\frac{{\theta}_{kl}^{{T}}}{180}\pi\right)}&
		\ldots&
		e^{j\pi(N-1)\sin\left(\frac{{\theta}_{kl}^{{T}}}{180}\pi\right)}
	\end{bmatrix}^{\rm T},
\end{align}
where $\theta^{{{R}}}_{kl}$ and $\theta^{{{T}}}_{kl}$ denote the angles of LoS path at AP-UAV $l$ and UE $k$, respectively.

\subsection{Accuracy Analysis}
We first investigate the accuracy of our derived asymptotic expression of the combining weights with $L=8$ and $K=4$, as illustrated in Fig.~\ref{accuracy}. The Monte Carlo simulation curves are obtained by averaging the sum SE over $10^5$ channel realizations and plotted as markers. As observed in Fig.~\ref{accuracy}, the sum SE monotonously increases with the number of antennas per UE and AP-UAV. The theoretical result fits the Monte Carlo simulation curve accurately, even when UEs and AP-UAVs are equipped with a small number of antennas. 
\begin{figure}[htb]
	\centerline{\includegraphics[width=0.5\columnwidth]{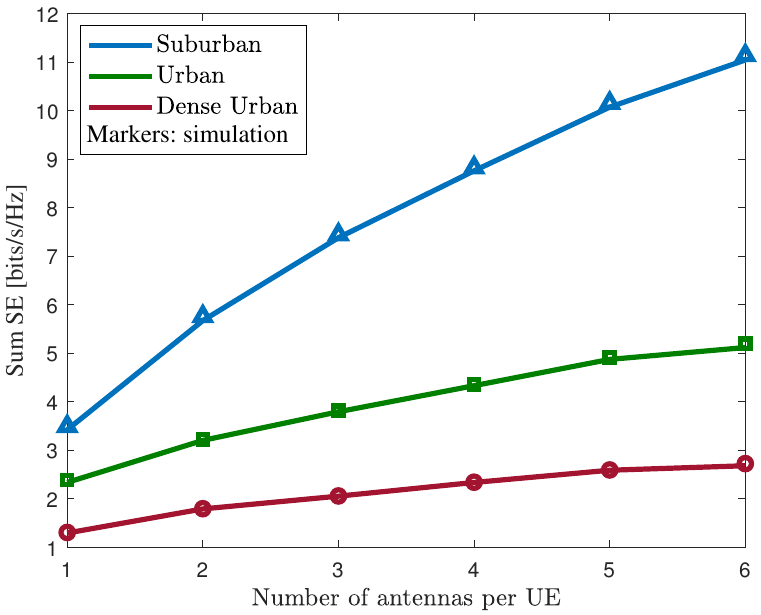}}
	\caption{Sum SE versus the number of antennas per UE and per AP-UAV with a fixed ratio, i.e., $M/N=2$, in different urban environments. The solid lines represent theoretical results and the markers represent Monte Carlo simulation results. In each case, the number of AP-UAV and UE are $L=8$ and $K=4$, respectively}
	\label{accuracy}
\end{figure}

\subsection{Effects of Distributed Deployment}
In practice, the distributed deployment of AP-UAVs, including the location, the number of AP-UAVs, and the number of antennas per AP-UAV, etc, has a significant influence on the system performance. In order to investigate how to determine the number of deployed AP-UAVs and the number of antennas per AP-UAV, we simulate the performance of the proposed MMSE-OneShot scheme in distributed systems under a given total number of receiving antennas. In Fig.~\ref{Ratio_of_LM}, we plot the sum SE as a function of the ratio of the number of AP-UAVs to the number of antennas per AP-UAV for a fixed total number of receiving antennas $M_{\rm{tot}}=144$. Note that only one AP-UAV is deployed when the ratio $L/M=1/144$ and the number of AP-UAVs gets larger as the ratio gets larger, while the number of antennas per AP-UAV gets smaller. We notice that the sum SE can reach a maximum value when $L/M=1$ in the suburban environment. Whereas in urban and dense urban environments, the sum SE continuously increases with the number of AP-UAVs getting larger as long as the number of antennas per AP-UAV is greater than one. This is due to the fact that there is a tradeoff between the LoS probability and spatial diversity. In specific, the LoS probability can be increased by deploying more distributed AP-UAVs, which also achieves macro-diversity. Nevertheless, deploying more antennas in every AP-UAV can take full advantage of the spatial diversity. In addition, the LoS probability in the suburban environment is greater than that in urban and dense urban environments~\cite{al2014optimal}. Therefore, it is more effective to increase the LoS probability (i.e., increase the number of AP-UAVs) in urban and dense urban environments, while system performance can be improved by increasing the spatial diversity (i.e., increasing the number of antennas per AP-UAV) in the suburban environment.

\begin{figure}[htb]
	\centerline{\includegraphics[width=0.5\columnwidth]{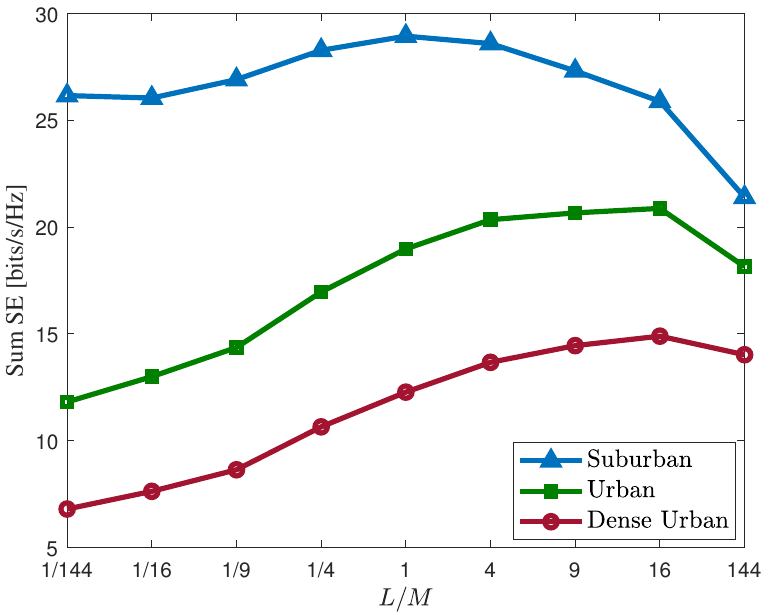}}
	\caption{Sum SE versus varying ratio of the number of AP-UAVs to the number of antennas per AP-UAV, i.e., $L/M$, in different urban environments with a fixed total number of transmitting antennas $M_{\rm{tot}}=144$. In each case, the number of UEs and antennas per AP-UAV are $K=16$ and $N=2$, respectively.}
	\label{Ratio_of_LM}
\end{figure}

\begin{figure}[htb]
	\centerline{\includegraphics[width=0.5\columnwidth]{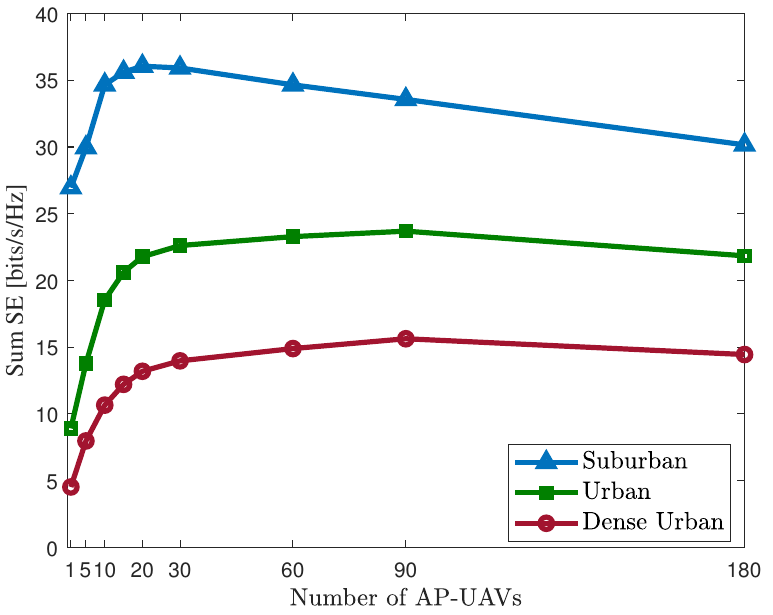}}
	\caption{Sum SE versus the number of AP-UAVs with a fixed total number of transmitting antennas $M_{\rm{tot}}=180$ in different urban environments. In each case, the number of UEs and antennas per AP-UAV are $K=10$ and $N=4$, respectively.}
	\label{NUAV}
\end{figure}

 The deployment strategy is further illustrated in Fig.~\ref{NUAV}, where the total number of receiving antennas is fixed as $M_{\rm{tot}}=180$. In the suburban environment, the sum SE significantly decreases when the number of antennas per AP-UAV is smaller than $9$ (when $L>20$). However, in urban and dense urban environments, the sum SE continuously increases until $L=180$, i.e., every AP-UAV is equipped with a single antenna. This phenomenon also indicates the tradeoff between the LoS probability and spatial diversity and that we should allocate the number of AP-UAVs and the number of receiving antennas per AP-UAV according to the physical propagation environment. 
 
 \subsection{Effects of the Height of AP-UAV in different environments}
 As shown in Fig.~\ref{Ratio_of_LM} and Fig.~\ref{NUAV}, the performance achieved by deploying more AP-UAVs is always better than that by deploying only one AP-UAV with multiple antennas, and it's necessary to establish a distributed system by deploying multiple AP-UAVs. Thus, we further discuss the different types of distributed systems in the following. 
 The effect of the height of AP-UAV in different urban environments is investigated in Fig.~\ref{Height} with $L=8$, $K=4$, $M=8$, and $N=4$. The fully-centralized scheme achieves the highest sum SE in all considered environments, while the small-cell scheme gets the worst performance. The proposed MMSE-OneShot scheme can obtain similar performance as the fully-centralized scheme, especially in urban and dense urban environments. In addition, we notice that as the height of AP-UAVs gets higher, the sum SE initially increases to reach a maximum value and then starts to decrease in urban and dense urban environments. The reason for this phenomenon is that, as the height is relatively small and becomes larger, the gain from increasing LoS probability outweighs the deterioration of the path loss caused by increasing distance. When the height exceeds $100$ meters, the gain obtained from the LoS probability becomes saturated and the large path loss degrades the system performance.
 
 \begin{figure}[htb]
 	\centerline{\includegraphics[width=0.5\columnwidth]{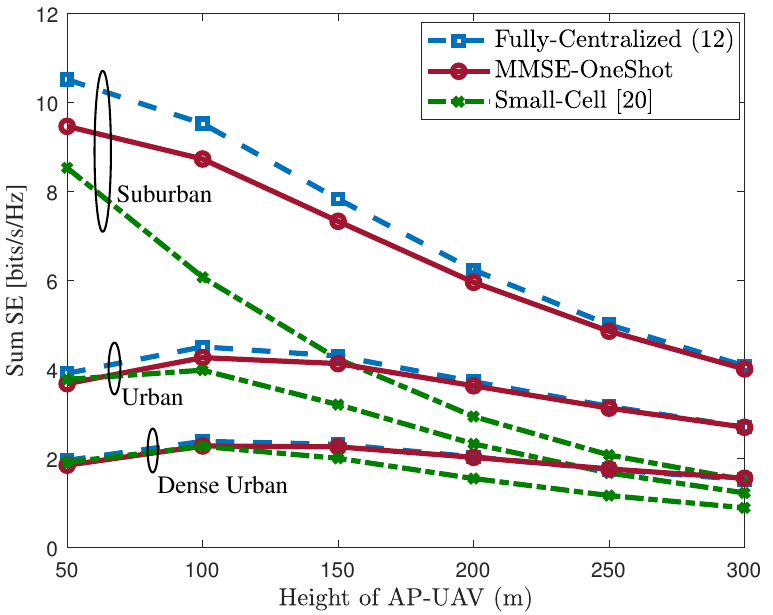}}
 	\caption{Sum SE versus the height of AP-UAVs in different urban environments. In each case, the number of AP-UAVs, UEs, antennas per AP-UAV, and antennas per UE are $L=8$, $K=4$, $M=8$, and $N=4$.}
 	\label{Height}
 \end{figure}

\begin{figure}[h]
	\centerline{\includegraphics[width=0.5\columnwidth]{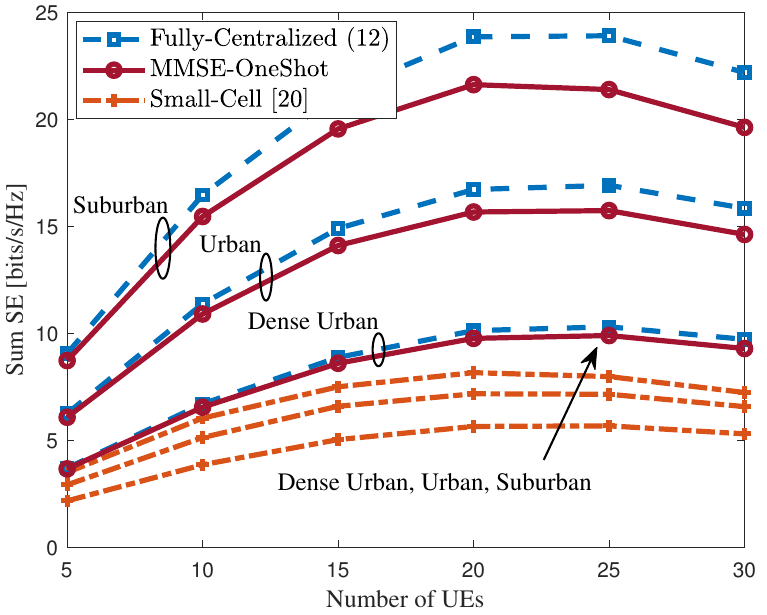}}
	\caption{Sum SE verse the number of UEs in different urban environments. In each case, the number of AP-UAVs, antennas per AP-UAV, and antennas per UE are $L=8$, $M=16$, and $N=4$.}
	\label{NUE}
\end{figure}

\begin{figure}[b]
	\centerline{\includegraphics[width=0.5\columnwidth]{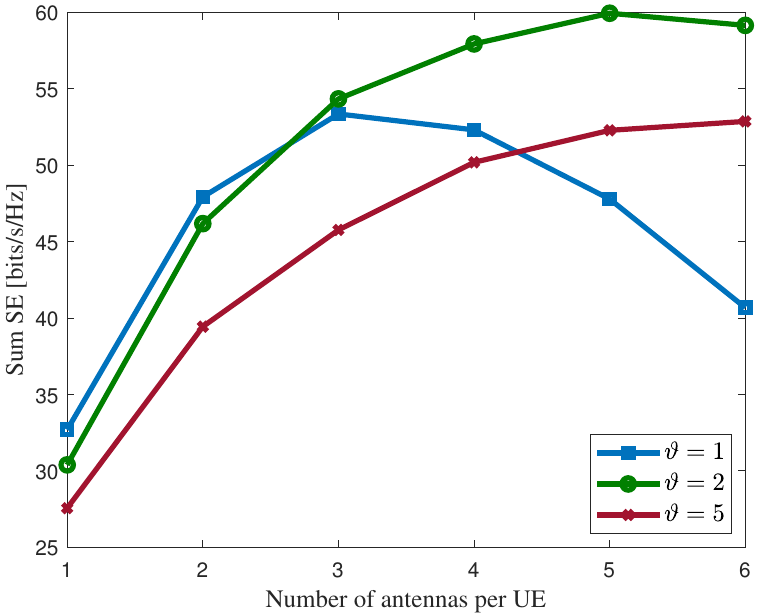}}
	\caption{Sum SE verse the number of transmit antennas with different pilot-reuse factors in the suburban environment with $L=12$, $K=10$, and $M=16$. The length of channel uses for channel estimation $\tau_p=KN/\vartheta$ and $\tau_c=100$. }
	\label{Npilot}
\end{figure}
 
\subsection{Effects of the Number of UEs}
In Fig.~\ref{NUE}, we show the sum SE as a function of the number of UEs $K$ with $L=8$, $M=16$, and $N=4$. Note that the sum SE can reach a maximum value in a specific $K$ in any case. This phenomenon is due to the pre-log factor $1-\tau_p / \tau_c$ and the inter-UE interference. With the increase of $K$, the pre-log factor in SE expressions decreases, which means that most channel uses of coherence block are used for channel estimation and the ratio of signal transmission is reduced. In addition, the inter-UE interference becomes large as the number of UEs increases, which is eventually harmful to the sum SE. Therefore, as shown in Fig.~\ref{NUE}, the optimal sum SE can be obtained by setting the active number of UEs as $20$ to $25$ in the considered system settings.

\subsection{Effects of the Length of Pilots}
To investigate the effects of the length of pilots of the proposed MMSE-OneShot scheme, we set different pilot-reuse factor $\vartheta$ in Fig.~\ref{Npilot}, where the number of pilot $\tau_p = KN/\vartheta$. The pilot reuse factor $\vartheta$ means that every $\vartheta$ UEs are allocated to the same pilot matrix and $\vartheta>1$ leads to the pilot contamination. The other parameters are set as $K=10$, $L=12$, and $M=16$. We observe that there is a maximum sum SE when $\vartheta=1$ and $\vartheta=2$ at $N=3$ and $N=5$, respectively. When the pilot-reuse factor is small (higher $\tau_p$), increasing the number of transmitting antennas requires a large number of pilots, which decreases the signal transmission efficiency and sum SE. On the other hand, when the pilot-reuse factor is large (lower $\tau_p$), a smaller number of pilots are needed and the pilot contamination gets worse. However, the gain from the increasing of transmitting antennas exceeds the loss of pilot contamination and thus the system performance monotonically increases as the transmitting antennas get large. Therefore, there is a trade-off between the pilot-reuse factor and the number of transmitting antennas and it is necessary to set the pilot-reuse factor according to the number of antennas per UE.

\section{Conclusion} \label{secConclude}
Limited by the lightweight fuselage, several AP-UAVs with few antennas compose the CF mMIMO system, where all AP-UAVs forward their channel coefficients to the CPU-UAV. However, due to the limited capacity of the wireless fronthaul link, the huge interaction overhead prevents the aerial CF mMIMO system from being deployed in practice. 
To address the above challenges, we propose a two-layer distributed uplink signal detection scheme for the CF mMIMO system. Specifically, the uplink signals are first recovered in AP-UAVs with local MMSE detector and then forwarded to the CPU-UAV, which weighted combines the signals to obtain the refined signals. By using the operator-valued free probability theory, the asymptotic expressions of the combining weights are obtained explicitly for the large dimensional antenna configurations, while the results are also verified to be accurate in realistic system configurations. Numerical results show that  the proposed MMSE-OneShot scheme not only achieves similar system performance as the fully-centralized scheme in various environments, but also significantly alleviates the interactive overhead between the AP-UAVs and the CPU-UAV. In addition, we investigate the impact of the number of deployed AP-UAVs as well as the number of receive antennas per AP-UAV on the achieved sum SE. In the suburban environment, where the LoS probability is relatively large, the number of antennas per AP-UAV should be increased to achieve spatial diversity. While in urban and dense urban environments, more AP-UAVs should be deployed to increase the LoS probability. 

% This is due to the fact that the asymptotic combining weights of MMSE-OneShot detection only depend on the slowly-varying statistical CSI, while the fully centralized detection requires instantaneous CSI. 
%In numerical results, we investigate the impact of the number of deployed AP-UAVs as well as the number of receive antennas per each AP-UAV on the achieved sum spectral efficiency. In the suburban environment, where the LoS probability is relatively large, the number of antennas per AP-UAV should be increased to achieve spatial diversity. While in urban and dense urban environments, more AP-UAVs should be deployed to increase the LoS probability. 

%It is meaningful to find the tradeoff between the number of AP-UAVs and the number of receiving antennas per AP-UAV, which gives an insightful method of deploying AP-UAVs and receiving antennas in practice.

\begin{appendices}
\section{Proof of Proposition~\ref{prop2}} \label{Proof2}
According to \cite{mingo2017free}, a random variable $\widetilde{\mb{L}}\in\mc{M}$ is said to be $\mc{D}$-valued semicircular if the free cumulant 
\begin{align}
	\kappa_m^{\mc{D}}(\widetilde{\mb{L}} b_1, \widetilde{\mb{L}} b_2, \ldots, \widetilde{\mb{L}} b_{m-1}, \widetilde{\mb{L}}) = 0,
\end{align}
for all $m\neq 2$, and all $b_1,\ldots,b_{m-1}\in\mc{D}$. The free cumulant $\kappa_m^{\mc{D}}$ is a mapping from $\mc{M}_m$ to $\mc{D}$.

We first expend $\mb{\wt L}_{\mb{B}}$ into a summation of $n\times n$ sub-matrices, such that
\begin{equation}
	\mb{\wt L}_{\mb{B}} =\sum_{k=1}^{K}\mb{\wt L}_{\mb{B},k},
\end{equation}
where the sub-matrices $\mb{\wt L}_{\mb{B},k}$ are given by
\begin{align} \label{L_1}
	\mb{\wt L}_{\mb{B},k} = 
	\begin{bmatrix}
		\mb{0}_{N_{\rm{tot}}} & \mb{\Sigma}_k^{\dag}\\
		\mb{\Sigma}_k & \mb{0}_{M}
	\end{bmatrix} ,
\end{align}
and $\mb{\Sigma}_k = \left[\mb{0}_{M\times N}, \ldots, (\mb{\wt C}_{l}^\prime+\sigma^2\mb{I}_{M})^{-\frac{1}{2}}\mb{\wt H}_{kl}\mb{P}_k,\ldots,\mb{0}_{M\times N}\right]\in\mbb{C}^{M\times N_{\rm{tot}}}$.

Recall that $\mb{\wh{C}}_{kl}$ is the correlation matrix of $\mb{\wt h}_{kl}$ and $\mb{\wt h}_{kl}=\text{vec}(\mb{\wt H}_{kl})$, thus we have
\begin{equation}
	\mb{\wt H}_{kl} = (\mb{\check{C}}_{kl})^{\frac{1}{2}}\mb{X}_{kl},
\end{equation}
where $\mb{\check{C}}_{kl} =\frac{1}{N} \sum_{i=1}^{N}\mb{\wh{C}}_{kl}^{ii}\in\mbb{C}^{M\times M}$ with $\mb{\wh{C}}_{kl}^{ii}$ being the $i$-th diagonal sub-matrix of $\mb{\wh{C}}_{kl}$, and the random matrix $\mb{X}_{kl}$ are $i.i.d.$ complex Gaussian distributed with entries having zero mean and unit variance. 
Therefore, the sub-matrices $\mb{\wt L}_{\mb{B},k}$ can be rewritten as 
\begin{equation}
	\mb{\wt L}_{\mb{B},k} = \mc{A}_k\mc{X}_k\mc{A}_k^{\dag},
\end{equation}
where $\mc{X}_k$ has the same structure as the block matrix $\mb{\wt L}_{\mb{B},k}$ in~(\ref{L_1}), while replacing the matrix $(\mb{\wt C}_{l}^\prime+\sigma^2\mb{I}_{M})^{-\frac{1}{2}}\mb{\wt H}_{kl}\mb{P}_k$ with $\mb{X}_{kl}$. The block matrix $\mc{A}_k$ is given by
\begin{align}
	\mc{A}_k &= 
	\begin{bmatrix}
		\mb{\wh P}_k^{\dag} & \mb{0}_{N_{\rm{tot}} \times M} \\
		\mb{0}_{M \times N_{\rm{tot}}} & (\mb{\wt C}_{l}^\prime+\sigma^2\mb{I}_{M})^{-\frac{1}{2}}\mb{\check{C}}_{kl}^{\frac{1}{2}} \\
	\end{bmatrix},
\end{align}
where $\mb{\wh P}_k = \text{blkdiag}(\mb{0}_{N},\dots,\mb{P}_k,\dots, \mb{0}_{N})$ with $\mb{P}_k$ in its $k$-th diagonal block.

Since $\{\mc{X}_k\}_{1\leq k\leq K}$ are Wigner matrices and independent from each other, they are semicircular and free over the sub-algebra $\mc{D}_n$. Following the same arguments as in \cite[Appendix B]{lu2016free}, $\{\mb{\wt L}_{\mb{B},k}\}_{1\leq k\leq K}$ are semicircular and free over the subalgebra $\mc{D}_n$. Therefore, the summation of $\{\mb{\wt L}_{\mb{B},k}\}$ is also semicircular over the subalgebra $\mc{D}_n$ and is free from any deterministic matrix from $\mc{M}_n$.

\section{Proof of Proposition~\ref{prop3}} \label{Proof3}
As previously stated, the limiting spectral distribution of $\mb{L}_{\mb{B}}$ is the free convolution of $\mb{\ob{L}}_{\mb{B}}$ and $\mb{\wt L}_{\mb{B}}$. The $\mc{D}_n$-valued Cauchy transform $\mc{G}_{\mb{L}_{\mb{B}}}^{\mc{D}_n}(\mb{\Lambda}_n(z))$ can be obtained by using the subordination formula~(\ref{sub1}). Recall that the $R$-transform $\mc{R}_{\widetilde{\mb{L}}_{\mb{B}}}^{\mc{D}_n}(\cdot)$ is the free cumulant generating function of $\mb{\wt L}_{\mb{B}}$ with the following formal power series expansion:
\begin{equation}
	\mc{R}_{\widetilde{\mb{L}}_{\mb{B}}}^{\mc{D}_n}\left(\mb{K}_n\right) = \kappa_1^{\mc{D}_n}({\mb{K}_n}) + \kappa_2^{\mc{D}_n}(\widetilde{\mb{L}}_{\mb{B}}\mb{K}_n,\widetilde{\mb{L}}_{\mb{B}}) + \kappa_3^{\mc{D}_n}(\widetilde{\mb{L}}_{\mb{B}}\mb{K}_n,\widetilde{\mb{L}}_{\mb{B}}\mb{K}_n,\widetilde{\mb{L}}_{\mb{B}}) + \cdots,\label{eqRBL}
\end{equation}
where $\kappa_i^{\mc{D}_n}$ denotes the $i$-th free cumulant of $\widetilde{\mb{L}}_{\mb{B}}$ over $\mc{D}_n$. In addition, since $\widetilde{\mb{L}}_n$ is semicircular over $\mc{D}_n$, all its cumulants in (\ref{eqRBL}) except $\kappa_2^{\mc{D}_n}$ are zero. 
Therefore, the $R$-transform reduces to the covariance function of $\wt{\mb{L}}_{\mb{B}}$ over $\mc{D}_n$ parameterized by $\mb {K}_n$, i.e.,
\begin{align} \label{R1}
	\mc{R}_{\widetilde{\mb{L}}_{\mb{B}}}^{\mc{D}_n}(\mb{K}_n) &= \mbb{E}_{\mc{D}_n}\left[\widetilde{\mb{L}}_{\mb{B}}\mb{K}_n\widetilde{\mb{L}}_{\mb{B}}\right],\nonumber\\
	&=
	\begin{bmatrix}
		\wt{\delta}(\mb{D}_2) & \mb{0} \\
		\mb{0} & \sum_{i=1}^{K}{\delta}_i(\mb{D}_{1i}) 
	\end{bmatrix},
\end{align}
where $\wt{\delta}(\mb{D}_2)={\rm{blkdiag}}\left(\wt{\eta}_{1l}((\mb{\wt C}_{l}^\prime+\sigma^2\mb{I}_{M})^{-\frac{1}{2}}, \mb{D}_2, \mb{P}_1),\ldots,\wt{\eta}_{Kl}((\mb{\wt C}_{l}^\prime+\sigma^2\mb{I}_{M})^{-\frac{1}{2}},\mb{D}_2,\mb{P}_K)\right)$ and ${\delta}_i(\mb{D}_{1i})={\eta}_{il}((\mb{\wt C}_{l}^\prime+\sigma^2\mb{I}_{M})^{-\frac{1}{2}},\mb{D}_{1i},\mb{P}_i)$.

We divide $\mc{G}_{\mb{L}_{\mb{B}}}^{\mc{D}_n}(\mb{\Lambda}_n(z))\in \mc{D}_n$ into the same partition as (\ref{K1}), $\mc{G}_{\mb{L}_{\mb{B}}}^{\mc{D}_n}(\mb{\Lambda}_n(z))$ can be rewritten as 
\begin{align} \label{partition1}
	\mc{G}_{\mb{L}_{\mb{B}}}^{\mc{D}_n}(\mb{\Lambda}_n(z)) = 
	\begin{bmatrix}
		\mc{G}_{\mb{D}_1}(z) & \mb{0}\\
		\mb{0} & \mc{G}_{\mb{D}_2}(z)
	\end{bmatrix},
\end{align} 
where $\mc{G}_{\mb{D}_1}(z)={\rm{blkdiag}}(\mc{G}_{\mb{D}_{11}}(z), \ldots, \mc{G}_{\mb{D}_{1K}}(z))\in\mbb{C}^{N_{\rm{tot}}\times N_{\rm{tot}}}$ with $\mc{G}_{\mb{D}_{1k}}(z)$ being a $N\times N$ sub-matrix and $\mc{G}_{\mb{D}_2}(z)$ is a $M\times M$ matrix. By replacing $\mb{K}_n$ in~(\ref{R1}) with $\mc{G}_{\mb{L}_{\mb{B}}}^{\mc{D}_n}(\mb{\Lambda}_n(z))$ and substituting $\mc{R}_{\widetilde{\mb{L}}_{\mb{B}}}^{\mc{D}_n}\left(\mc{G}_{\mb{L}_{\mb{B}}}^{\mc{D}_n}(\mb{\Lambda}_n(z))\right)$ and $\mb{\ob L}_{\mb{B}}$ in the subordination formula~(\ref{sub1}), we have
\begin{align}
	\mc{G}_{\mb{L}_{\mb{B}}}^{\mc{D}_n}(\mb{\Lambda}_n(z)) = \mbb{E}_{\mc{D}_n}
	\begin{bmatrix}
		\left(\mb{\Psi}(z)-\mb{\ob G}^{\dag}\wt{\mb{\Psi}}^{-1}(z)\mb{\ob G}\right)^{-1} & \mb{\Psi}^{-1}(z)\mb{\ob G}^{\dag}\left(\wt{\mb{\Psi}}-\mb{\ob G}\mb{\Psi}^{-1}(z)\mb{\ob G}^{\dag}\right)^{-1}\\
		\left(\wt{\mb{\Psi}}-\mb{\ob G}\mb{\Psi}^{-1}(z)\mb{\ob G}^{\dag}\right)^{-1}\mb{\ob G}\mb{\Psi}^{-1}(z) & \left(\mb{\wt \Psi}(z)-\mb{\ob G}{\mb{\Psi}}^{-1}(z)\mb{\ob G}^{\dag}\right)^{-1}\\
	\end{bmatrix},
\end{align}
where $\mb{\Psi}(z)=z\mb{I}_{N_{\rm{tot}}}-\wt{\delta}(\mc{G}_{\mb{D}_2}(z))$ and $\mb{\wt{\Psi}}=\mb{I}_{M}-\sum_{i=1}^{K}\delta_i(\mc{G}_{\mb{D}_{1i}}(z))$. Thus, based on the partition~(\ref{partition1}), the expressions of $\mc{G}_{\mb{D}_1}(z)$ and $\mc{G}_{\mb{D}_2}(z)$ can be then obtained.

\section{Proof of Proposition~\ref{prop4}} \label{Proof4}
Following the similar procedure as in Appendix~\ref{Proof2}, we expend $\mb{\wt L}_{\mb{\wt B}}$ into a summation of $\wt n\times \wt n$ sub-matrices, such that
\begin{equation}
	\mb{\wt L}_{\mb{\wt B}} = \sum_{k=1}^{K}\mb{\wt L}_{\mb{\wt B},k},
\end{equation}
where the sub-matrices $\mb{\wt L}_{\mb{\wt B},k}$ are given by
\begin{equation} \label{L2}
	\mb{\wt L}_{\mb{\wt B},k} = 
	\left[\begin{array}{c:cc}
		\mb{0}_{M} & \mb{0}_{M} &  \mb{\wt \Sigma}_k\\ \hdashline
		\mb{0}_{M} & \mb{0}_{M} & \mb{0}_{{M\times N_{\rm{tot}}}} \\
		\mb{\wt \Sigma}_k^{\dag} & \mb{0}_{{N_{\rm{tot}}}\times M} & \mb{0}_{{N_{\rm{tot}}}}
	\end{array}\right],
\end{equation}
and the embedded matrix $\mb{\wt \Sigma}_k=\left[\mb{0}_{M\times N},\dots,\mb{\wt H}_{kl}\mb{P}_k,\dots, \mb{0}_{M\times N}\right]\in \mbb{C}^{M\times N_{\rm{tot}}}$.

We rewrite $\mb{\wt L}_{\mb{\wt B},k}$ as follows
\begin{equation}
	\mb{\wt L}_{\mb{\wt B},k} = \mc{\wt X}_k \mc{\wt A}_k \mc{\wt X}_k^{\dag},
\end{equation} 
where $\mc{\wt X}_k$ has the same structure as $\mb{\wt L}_{\mb{\wt B},k}$ in~(\ref{L2}), while replacing the matrix $\mb{\wt H}_{kl}\mb{P}_k$ with $\mb{X}_{kl}$. In addition, the block matrix $\mc{\wt A}_k$ is expressed as
\begin{equation}
	\mc{\wt A}_k = 
	\begin{bmatrix}
		\mb{\check{C}}_{kl} & \mb{0}_{M} & \mb{0}_{M\times N_{\rm{tot}}} \\
		\mb{0}_{M} & \mb{0}_{M} & \mb{0}_{M\times N_{\rm{tot}}} \\
		\mb{0}_{N_{\rm{tot}}\times M} & \mb{0}_{N_{\rm{tot}}\times M} & \mb{\wh P}_k \\
	\end{bmatrix}
\end{equation}
where $\mb{\wh P}_k = \text{blkdiag}(\mb{0}_{N},\dots,\mb{P}_k,\dots, \mb{0}_{N})$. 
Therefore, following the same arguments as in Appendix~\ref{Proof2}, the random variable $\mb{\wt L}_\mb{\mb{\wt B}}$ is semicircular and free from $\mb{\ob L}_\mb{\mb{\wt B}}$ over $\mc{D}_{\wt n}$.

\section{Proof of Proposition~\ref{prop5}} \label{Proof5}
Based on Proposition~\ref{prop4}, the random variable $\mb{\wt L}_{\mb{\wt B}}$ is semicircular and free from $\mb{\ob L}_{\mb{\wt B}}$ over $\mc{D}_{\wt n}$. Thus, according to Appendix~\ref{Proof3}, the $R$-transform of $\mb{\wt L}_{\mb{\wt B}}$ reduces to the covariance function of $\wt{\mb{L}}_{\mb{\wt B}}$ over $\mc{\wt D}$ parameterized by $\mb K_{\wt n}$, i.e.,
\begin{align} \label{R2}
	\mc{R}_{\widetilde{\mb{L}}_{\mb{\wt B}}}^{\mc{\wt D}}(\mb{K}_{\wt n}) &= \mbb{E}_{\mc{D}_{\wt n}}\left[\widetilde{\mb{L}}_{\mb{\wt B}}\mb{K}_{\wt n}\widetilde{\mb{L}}_{\mb{\wt B}}\right]\nonumber\\
	&=
	\begin{bmatrix}
		\sum_{i=1}^{K}\bm{\zeta}_{i}({{\mb{D}}}_{i}) & \mb{0}_{M} & \mb{0}_{M\times N_{\rm{tot}}} &   \\
		\mb{0}_{M} & \mb{0}_{M} & \mb{0}_{M\times N_{\rm{tot}}} &\ \\
		\mb{0}_{N_{\rm{tot}}\times M} & \mb{0}_{N_{\rm{tot}}\times M} & \bm{\wt \zeta}(\mb{\wt D})  \\
	\end{bmatrix},
\end{align}
where $\bm{\wt \zeta}(\mb{\wt D})={\rm{blkdiag}}\left(\wt{\eta}_{1l}(\mb{I}_{M}, \mb{\wt D}, \mb{P}_1),\ldots, \wt{\eta}_{Kl}(\mb{I}_{M},\mb{\wt D},\mb{P}_K)\right)$ and $\bm{\zeta}_{i}({{\mb{D}_{i}}})={\eta}_{il}(\mb{I}_{M},\mb{D}_{i},\mb{P}_i)$.

By the same matrix partitioning as in (\ref{K2}), $\mc{G}_{\mb{L}}^{\mc{D}}(\mb{\Lambda}(z))$ is partitioned into
\begin{align} \label{G2}
	\mc{G}_{\mb{L}}^{\mc{D}}(\mb{\Lambda}(z))=
	\begin{bmatrix}
		\mc{G}_{\mb{\wt D}}(z) & \mb{0}_{M} & \mb{0}_{M\times N_{\rm{tot}}} &\\
		\mb{0}_{M} & \mb{0}_{M} & \mb{0}_{M\times N_{\rm{tot}}} & \\
		\mb{0}_{N_{\rm{tot}}\times M} & \mb{0}_{N_{\rm{tot}}\times M} & \mc{G}_{\mb{D}}(z) & \\
	\end{bmatrix},
\end{align}
where $\mc{G}_{\mb{\wt D}}(z)$ is a $M\times M$ matrix and $\mc{G}_{\mb{D}}(z)$ is a $N_{\rm{tot}}\times N_{\rm{tot}}$ diagonal block matrix with $\{\mc{G}_{\mb{D}_i}(z)\}_{1\leq i \leq K}$ on the diagonal. 

By replacing $\mb{K}$ in (\ref{R2}) with $\mc{G}_{\mb{L}}^{\mc{D}}(\mb{\Lambda}(z))$ in (\ref{G2}) and according to the subordination formula (\ref{sub2}), we have 
\begin{equation} \label{equGL}
	\begin{aligned} {}
		\mc{G}_{\mb{L}}^{\mc{D}}(\mb{\Lambda}(z)) &=\mbb{E}_{\mc{D}}
		\left[\begin{array}  {c:ccc}
			\mb{\Phi}(z) & -(\mb{\wt{C}}_{l}^{\prime})^{\frac{1}{2}}  & -\mb{\ob{H}}_l\mb{P} & \\ \hdashline
			-(\mb{\wt{C}}_{l}^{\prime})^{\frac{1}{2}} & \mb{I}_{M} & \mb{0} & \\
			-\mb{P}^{\dag}\mb{\ob{H}}_l^{\dag} & \mb{0} & \mb{\wt \Phi}(z) &
		\end{array}\right]^{-1},
	\end{aligned}
\end{equation}
where $\mb{\Phi}(z)=z\mb{I}_{M}-\sum_{i=1}^{K}\bm{\zeta}_{i}({{\mc{G}_{\mb{D}_{i}}(z)}})$ and $\mb{\wt \Phi}(z)=\mb{I}_{N_{\rm{tot}}}-\bm{\wt \zeta}(\mb{\mc{G}_{\mb{\wt D}}(z)})$.

%Recall that the Cauchy transform of $\mb{\wt B}_l$ can be rewritten as 
%\begin{align}
%	\mc{G}_{\mb{\wt{B}}_l,\mb{\Xi}} =\frac{1}{M} \mathrm{tr}\left(\mc{G}_{\mb{\wt D}}(z)\mb{\Xi}\right),
%\end{align} 
By applying the block matrix inversion lemma to (\ref{equGL}), we can obtain the expressions of $\mc{G}_{\mb{\wt D}}(z)$ and $\mc{G}_{\mb{D}}(z)$ as in (\ref{equGD}) and (\ref{equGDT}).

\section{Derivative of Cauchy Transform} \label{Appderivative}
Recall that the Cauchy transform $\mc{G}^\prime_{\mb{B}_l}(z)=\frac{1}{N_t}\mbb{E}\{{\rm{Tr}}(z\mb{I}_{N_t}-\mb{B}_l)^{-2}\}$ in~(\ref{matrixA}) denotes the derivative of $\mc{G}_{\mb{B}_l}(z)$ and according to Proposition~\ref{prop3}, we have
\begin{equation}
	\mc{G}^\prime_{\mb{B}_l}(z) = \frac{1}{N_t}\operatorname{Tr}\left(\mc{G}^\prime_{\mb{D}_1}(z)\right),
\end{equation}
where $\mc{G}^\prime_{\mb{D}_1}(z)$ is the derivative of $\mc{G}_{\mb{D}_1}(z)$, which satisfies the following matrix-valued fix-point equations
\begin{align}
	\mc{G}^\prime_{\mb{D}_1}(z) &= \mc{G}_{\mb{D}_1}(z)\left[\mc{G}^{-1}_{\mb{D}_1}(z)\right]^\prime\mc{G}_{\mb{D}_1}(z), \\
	\mc{G}^\prime_{\mb{D}_2}(z) &= \mc{G}_{\mb{D}_2}(z)\left[\mc{G}^{-1}_{\mb{D}_2}(z)\right]^\prime\mc{G}_{\mb{D}_2}(z),
\end{align}
where $\left[\mc{G}^{-1}_{\mb{D}_1}(z)\right]^\prime$ and $\left[\mc{G}^{-1}_{\mb{D}_2}(z)\right]^\prime$ are respectively the derivatives of $\mc{G}^{-1}_{\mb{D}_1}(z)$ and $\mc{G}^{-1}_{\mb{D}_2}(z)$. And some additional parameters are given as follows
\begin{align}
	\left[\mc{G}^{-1}_{\mb{D}_1}(z)\right]^\prime &= \mb{\Psi}^\prime(z)-\mb{\ob{G}}_l^{\dag}\left[\mb{\wt \Psi}^{-1}(z)\right]^\prime\mb{\ob{G}}_l, \\
	\left[\mc{G}^{-1}_{\mb{D}_2}(z)\right]^\prime &= \mb{\wt \Psi}^\prime(z)-\mb{\ob G}_l\left[{\mb{\Psi}}^{-1}(z)\right]^\prime\mb{\ob G}_l^{\dag}, \\
	\left[{\mb{\Psi}}^{-1}(z)\right]^\prime &= -{\mb{\Psi}}^{-1}(z)\left(\mb{I}_{N_t} - \wt{\delta}^\prime(\mc{G}_{\mb{D}_2}(z))\right){\mb{\Psi}}^{-1}(z),\\
	\left[{\mb{\wt \Psi}}^{-1}(z)\right]^\prime &= {\mb{\wt \Psi}}^{-1}(z)\left(\sum_{i=1}^{K}\delta^\prime_i(\mc{G}_{\mb{D}_{1i}}(z))\right){\mb{\wt \Psi}}^{-1}(z),\\
	\wt{\delta}^\prime(\mc{G}_{\mb{D}_2}(z)) &= \wt{\delta}(\mc{G}^\prime_{\mb{D}_2}(z)), \quad
	\delta^\prime_i(\mc{G}_{\mb{D}_{1i}}(z)) = 
	\delta_i(\mc{G}^\prime_{\mb{D}_{1i}}(z)).
\end{align}

As for the Cauchy transform $\mc{G}^\prime_{\mb{\wt{B}}_l, \mb{\Xi}}(z)=\frac{1}{M_l}\mbb{E}\{{\rm{Tr}}((z\mb{I}_{M_l}-\mb{\wt B}_l)^{-2}\mb{\Xi})\}$ in~(\ref{matrixY}), we have 
\begin{equation}
	\mc{G}^\prime_{\mb{\wt{B}}_l, \mb{\Xi}}(z) = \frac{1}{M_l}\operatorname{Tr}(\mc{G}^\prime_{\mb{\wt D}}(z)\mb{\Xi}),
\end{equation}
where $\mc{G}^\prime_{\mb{\wt D}}(z)$ is the derivative of $\mc{G}_{\mb{\wt D}}(z)$, which satisfies the following matrix-valued fix-point equations
\begin{align}
	\mc{G}^\prime_{\mb{\wt D}}(z) &= \mc{G}_{\mb{\wt D}}(z)\left[\mc{G}^{-1}_{\mb{\wt D}}(z)\right]^\prime\mc{G}_{\mb{\wt D}}(z),\\
	\mc{G}^\prime_{\mb{D}}(z) &= \mc{G}_{\mb{D}}(z) \left[\mc{G}^{-1}_{\mb{D}}(z)\right]^\prime \mc{G}_{\mb{D}}(z), 
\end{align}
where $\left[\mc{G}^{-1}_{\mb{\wt D}}(z)\right]^\prime$ and $\left[\mc{G}^{-1}_{\mb{D}}(z)\right]^\prime$ are respectively the derivatives of $\mc{G}^{-1}_{\mb{\wt D}}(z)$ and $\mc{G}^{-1}_{\mb{D}}(z)$. And some additional parameters are given as follows
\begin{align}
	\left[\mc{G}^{-1}_{\mb{\wt D}}(z)\right]^\prime &= \mb{\Phi}^\prime(z) -  \mb{\ob{H}}_l\mb{P}\left[\mb{\wt \Phi}^{-1}(z)\right]^\prime\mb{P}^{\dag}\mb{\ob{H}}_l^{\dag},\\
	\left[\mc{G}^{-1}_{\mb{D}}(z)\right]^\prime &= \mb{\wt \Phi}^\prime(z)-\mb{P}^{\dag}\mb{\ob{H}}_l^{\dag}\left[\mb{\Phi}^{-1}(z)\right]^\prime\mb{\ob{H}}_l\mb{P}-\mb{\Upsilon}^\prime(z), \\
	\left[\mb{\Phi}^{-1}(z)\right]^\prime &= -\mb{\Phi}^{-1}(z)\left(\mb{I}_{M_l}-\sum_{i=1}^{K}\bm{\zeta}^\prime_{i}({{\mc{G}_{\mb{D}_{i}}(z)}})\right)\mb{\Phi}^{-1}(z), \\
	\left[\mb{\wt \Phi}^{-1}(z)\right]^\prime &= \mb{\wt \Phi}^{-1}(z)\left(\bm{\wt \zeta}^\prime(\mb{\mc{G}_{\mb{\wt D}}(z)})\right)\mb{\wt \Phi}^{-1}(z),\\
	\bm{\zeta}^\prime_{i}({{\mc{G}_{\mb{D}_{i}}(z)}}) &= \bm{\zeta}_{i}({{\mc{G}^\prime_{\mb{D}_{i}}(z)}}),\quad
	\bm{\wt \zeta}^\prime(\mb{\mc{G}_{\mb{\wt D}}(z)}) = \bm{\wt \zeta}(\mb{\mc{G}^\prime_{\mb{\wt D}}(z)}).
\end{align}
In addition, $\mb{\Upsilon}^\prime(z)$ is the derivative of $\mb{\Upsilon}(z)$, which can be easily obtained by the matrix calculus.

%\begin{align}	
%	\mb{\Psi}^\prime(z) &=  \mb{\Psi}_1^\prime(z)\mb{\Psi}_2(z)\mb{\Psi}_3(z)+ \mb{\Psi}_1(z)\mb{\Psi}_2^\prime(z)\mb{\Psi}_3(z)  \\
%	& \qquad +  \notag \mb{\Psi}_1(z)\mb{\Psi}_2(z)\mb{\Psi}_3^\prime(z), \\
%	\mb{\Psi}_1(z) &= \mb{P}^{\dag}\mb{\ob{H}}_l^{\dag}\mb{\Phi}^{-1}(z)(\mb{\wt{C}}_{l}^{\prime})^{\frac{1}{2}},\\
%	\mb{\Psi}_1^\prime(z) &= \mb{P}^{\dag}\mb{\ob{H}}_l^{\dag}\left[\mb{\Phi}^{-1}(z)\right]^\prime(\mb{\wt{C}}_{l}^{\prime})^{\frac{1}{2}},\\
%	\mb{\Psi}_2(z) &= \left( \mb{I}_{M_l}-(\mb{\wt{C}}_{l}^{\prime})^{\frac{1}{2}}\mb{\Phi}^{-1}(z)(\mb{\wt{C}}_{l}^{\prime})^{\frac{1}{2}}\right)^{-1},\\
%	\mb{\Psi}_2^\prime(z) &= \mb{\Psi}_2(z)\left((\mb{\wt{C}}_{l}^{\prime})^{\frac{1}{2}}\left[\mb{\Phi}^{-1}(z)\right]^\prime(\mb{\wt{C}}_{l}^{\prime})^{\frac{1}{2}}\right)\mb{\Psi}_2(z),\\
%	\mb{\Psi}_3(z) &= (\mb{\wt{C}}_{l}^{\prime})^{\frac{1}{2}}\mb{\Phi}^{-1}(z)\mb{\ob{H}}_l\mb{P},\\
%	\mb{\Psi}_3^\prime(z) &= (\mb{\wt{C}}_{l}^{\prime})^{\frac{1}{2}}\left[\mb{\Phi}^{-1}(z)\right]^{-1}\mb{\ob{H}}_l\mb{P},
%\end{align}

\end{appendices}

\bibliographystyle{IEEEtran}
%\bibliography{MIMO_OneShot}

\end{document}